\newtheorem{theorem}{Theorem}[section]
\newtheorem{remark}[theorem]{Remark}
\newcommand{\abeta}[0]{amyloid-$\beta$}
\newcommand{\beginsupplement}{%
        \setcounter{table}{0}
        \renewcommand{\thetable}{S\arabic{table}}%
        \setcounter{figure}{0}
        \renewcommand{\thefigure}{S\arabic{figure}}%
     }
\numberwithin{equation}{section}  %amsmath command: tie counter to section
\newtheorem{thm}{Theorem}[section]
\newtheorem{prop}[theorem]{Proposition}
\begin{document}

\vspace*{-0.9cm}

\title
[Stability analysis of a membrane-protein clustering model]
{Stability analysis of a bulk-surface reaction model for membrane-protein clustering}

\author[L.M. Stolerman]{Lucas M. Stolerman$^{1*}$}
\author[M. Getz]{Michael Getz$^{1*}$}
\author[S.G. Llewellyn Smith]{Stefan G. Llewellyn Smith$^{1,2}$}
\author[M. Holst]{Michael Holst$^{3,4}$}
\author[P. Rangamani]{Padmini Rangamani$^{1**}$}

\thanks{$^1$Department of Mechanical and Aerospace Engineering, University of California, San Diego, La Jolla CA 92093-0411}
\thanks{$^2$Scripps Institution of Oceanography, University of California, San Diego, La Jolla, CA 92093-0213}
\thanks{$^3$Department of Mathematics, University of California, San Diego, La Jolla, CA 92093-0112}
\thanks{$^4$Department of Physics, University of California, San Diego, La Jolla, CA 92093-0424}
%\todosgls{Updated zip codes and my address}
\thanks{$^*$ Both these authors contributed equally}
\thanks{$^{**}$To whom correspondence should be addressed. e-mail: prangamani@ucsd.edu}
 
\date{\today}

%\keywords{plasma membrane, protein aggregation, membrane-protein clustering, bulk-surface models, surface diffusion, geometric partial differential equations, stability analysis, analytic bifurcation theory, numerical bifurcation analysis}
\keywords{plasma membrane, membrane-protein clustering, bulk-surface models, surface diffusion, geometric PDE, stability analysis}
%\todosgls{Do we really need quite this many key words?}

\begin{abstract}
Protein aggregation on the plasma membrane (PM) is of critical importance to many cellular processes such as cell adhesion, endocytosis, fibrillar conformation, and vesicle transport. Lateral diffusion of protein aggregates or clusters on the surface of the PM plays an important role in governing their heterogeneous surface distribution. However, the stability behavior of the surface distribution of protein aggregates remains poorly understood. Therefore, understanding the spatial patterns that can emerge on the PM solely through protein-protein interaction, lateral diffusion, and feedback is an important step towards a complete description of the mechanisms behind protein clustering on the cell surface.  In this work, we investigate the pattern formation of a reaction-diffusion model that describes the dynamics of a system of ligand-receptor complexes. The purely diffusive ligand in the cytosol can bind receptors in the PM, and the resultant ligand-receptor complexes not only diffuse laterally but can also form clusters resulting in different oligomers. Finally, the largest oligomers recruit ligands from the cytosol in a positive feedback. From a methodological viewpoint, we provide theoretical estimates for diffusion-driven instabilities of the protein aggregates based on the Turing mechanism. Our main result is a threshold phenomenon, in which a sufficiently high recruitment of ligands promotes the input of new monomeric components and consequently drives the formation of a single-patch spatially heterogeneous steady-state.

%We also obtain the distribution of the size of the protein aggregates and their spatial locations depending on both initial conditions and kinetic parameters using computational methods. Our results suggest that spatial heterogeneity emerges only when the cluster diffusion rates decay as a function of cluster size.
\end{abstract}

\maketitle

\vspace*{-0.8cm}
{\scriptsize
\tableofcontents
}
\vspace*{-1.2cm}

\clearpage
%%%%%%%%%%%%%%%%%%%%%%%%%%%%%%%%%%%%%%%%%%%%%%%%%%%%%%%%%%%%%%%%%%%%%%%%%%%%%%

\section{Introduction}
\noindent

Biological membranes are unique two-dimensional structures that separate cellular contents from the extracellular environment and regulate the transport of material into and out of the cell \cite{darnell1990molecular,stillwell2013introduction}. 
In addition to lipids and carbohydrates, these membranes contain a large proportion of proteins, the composition of which depends on the cell type \cite{stillwell2013introduction,guidotti1972composition,yeagle2011structure,albersheim1975carbohydrates,jain1988introduction}. 
One of the interesting features of membrane proteins is their ability to form clusters on the cell surface \cite{hashimoto2010,johannes2018,ispolatov2005}. This clustering of proteins on the plasma membrane (PM) results in a spatial heterogeneity in the distribution of protein densities. Many factors can induce such a spatial heterogeneity, including lateral diffusion, physical barriers from the cytoskeleton \cite{porat-shliom2013}, lipid raft affinity \cite{lorent2017}, and curvature differences along the membrane\cite{johannes2018}.
The formation of protein clusters is intimately related to various cellular phenomena such as polarization, membrane depolarization, receptor signaling, enzyme activity, and  cytoskeletal regulation \cite{Mori2008,lao2010,lemmon2010,sleno2018,baisamy2005,chen2005}.

A particular example of proteins forming clusters on the membrane is well-elucidated by amyloid-$\beta$ aggregation/fibrillation in the context of Alzheimer's disease. 
It is thought that amyloid-$\beta$ can become cytotoxic is when it aggregates on the membrane at high levels \cite{askarova2011}. 
Biophysical measurements show that amyloid-$\beta$ aggregates become more stable when oligomerized on the membrane surface \cite{sarkar2013,zhang2012} and also can destabilize certain membrane compositions \cite{andreasen2015}. It is also thought that membrane components such as cholesterol may initiate aggregation of \abeta, which may then be bolstered by as a yet-unidentified secondary feedback mechanism \cite{Habchi2018}.In addition to \abeta, surface receptors such as $\alpha$-amino-3-hydroxy-5-methyl-4-isoxazolepropionic acid receptor (AMPAR) \cite{choquet2010,gan2015} and membrane-bound kinases such as Fyn \cite{padmanabhan2019} are known to cluster on the membrane; these clusters have been implicated in neuronal functioning in physiology \cite{lao2010,lorent2017} and disease \cite{andreasen2015,askarova2011}.

One of the open questions in the field of protein aggregations is the role of the spatial organization of membrane proteins due to bulk-surface reactions and feedback mechanisms.  Mathematical modeling has provided substantial insight into the geometric coupling of bulk-surface reaction-diffusion systems \cite{Rangamani2013,Frey2018,Denk2018}, including wave-pinning formulations  \cite{Mori2008,Cusseddu2018}, spatial patterning  \cite{Giese2015,Diegmiller2018}, and generalized stability analysis \cite{Madzvamuse2015,Ratz2012, Ratz2015}. From a modeling perspective, several authors have proposed the classical Smoluchowski coagulation model \cite{Smoluchowski1918,Drake1972} as a suitable candidate for describing protein aggregation. The use of different aggregation kernels \cite{Arosio2012, Zidar2018} allowed a successful combination of experimental measurements with computational predictions. These models performed well in terms of comparisons to data and estimation of kinetic parameters such as the aggregation time and the asymptotic cluster distribution. However, by using the original Smoluchowski  systems of ordinary differential equations (ODEs), these studies lack descriptions of the spatial protein organization, which can be crucial for the understanding of many  cellular processes. To overcome this limitation, one can explicitly consider molecular diffusion and use systems of partial differential equations (PDEs) as has been done in the  amyloid-$\beta$ aggregation models \cite{Achdou2013,Franchi2016,Bertsch2016}.  These studies have provided detailed theoretical estimates in terms of boundary conditions and homogenization tools. However, they have restricted the spatial scale to a small three-dimensional region of cerebral tissue and not described intracellular phenomena. Therefore, there is a need for mathematical models for protein aggregation in the PM with a proper spatial description to account for the numerous cellular processes that occur due to heterogeneous protein distribution.

%\subsection{Our work}
In this work, our primary goal was to investigate the emergence of spatially heterogeneous steady-state profiles of membrane protein aggregates to identify how feedback between cytosolic and membrane components can drive pattern formation on the membrane. 
To this end, we merged the concept of bulk-surface reaction-diffusion systems with the Smoluchowski approach to introduce a new bulk-surface model for membrane protein clustering (\Cref{fig:olig_1}). 
The model equations describe a 
purely diffusive ligand in the cytosol which then undergoes membrane binding, without any cytosolic aggregation. 
The resultant membrane-bound protein can diffuse laterally and also form clusters with different oligomeric sizes.
Finally, the oligomers of maximum size can further recruit more cytosolic proteins, resulting in a positive feedback for the membrane protein aggregates and stabilization of the oligomers \cite{Habchi2018,sarkar2013}.
Following the approach of Ratz and Roger \cite{Ratz2012,Ratz2015}, we then analyzed the model for diffusion-driven instabilities using the classical Turing mechanisms. We found these interactions allow diffusion-driven instabilities and pattern formation in the absence of a sustained localized stimulus.

In what follows, we present the model assumptions and derivation in \Cref{sec:model_dev_sec}, the mathematical analysis including stability analysis in \Cref{sec:math_analysis_sec}, and conclude with numerical simulations (\Cref{sec:num_sim_sec}) and a discussion (\Cref{sec:diss}) about our findings in the context of amyloid-$\beta$ and cluster of other membrane proteins.

\begin{figure}[btp]
%\vspace*{-0.5cm}
\centering
\includegraphics[width=0.85\textwidth]{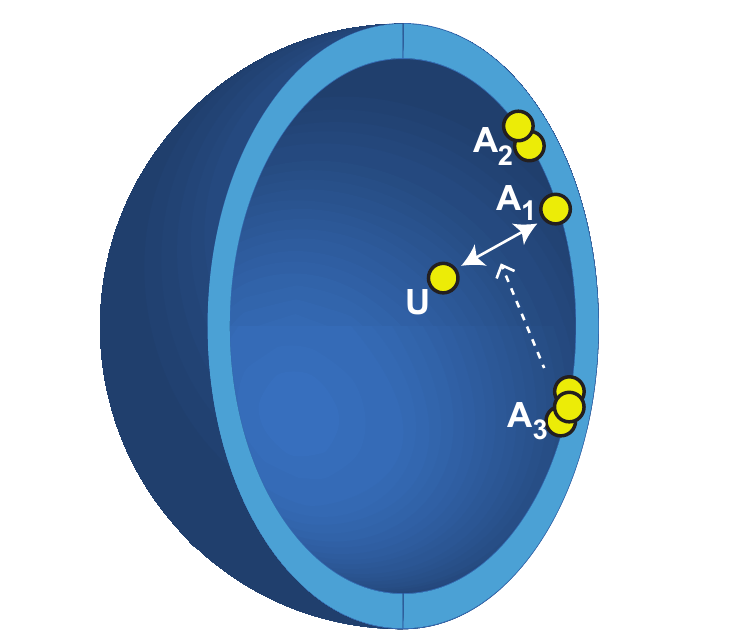}
%\vspace*{-0.4cm}
\caption{\footnotesize A bulk-surface compartmental model for protein aggregation. As proteins approach the surface they can associate and then oligomerize. This oligomerization then drives further membrane association of monomers. Arrows represent a state change of u to a; the dotted line shows the `catalytic' feedback of $a_N$ to u and a.}
\label{fig:olig_1}
%\vspace*{-0.2cm}
\end{figure}

\clearpage

\section{Model Development}
\label{sec:model_dev_sec}

Here we present our bulk-surface reaction-diffusion model for protein aggregation, including feedback. We describe our assumptions (\Cref{assump_subsec}) and the governing equations 
(\Cref{gov_eq_subsec}) in detail. In  \Cref{mass_cons_subsec}, we prove that the total mass of the system is conserved over time, and in \Cref{nom_dim_subsec}, we non-dimensionalize the model. Finally, in section \Cref{sym_red_sec}, we perform the systems reduction when the cytosolic diffusion goes to infinity, following the mathematical approach of Ratz and Roger \cite{Ratz2012,Ratz2015}.

\subsection{Assumptions}
\label{assump_subsec}

In our system, we assume that \textbf{U} represents the volume component, which can freely diffuse in the cytoplasm. Upon binding to the plasma membrane, it forms a surface monomer component $\textbf{A}_1$.  The $\textbf{A}_1$ molecules laterally diffuse in the membrane and form the oligomeric components $\textbf{A}_j$. Here, $j$ denotes the number of $\textbf{A}_1$ molecules in the oligomer, which is at most $N \in \mathbb{N}$.  In terms of chemical reactions, $\ch{$\textbf{U}$   <=>[$f$] $\textbf{A}_1$}$ denotes the binding of the cytosolic component to the plasma membrane with a reaction flux  $f$. The subsequent oligomerization at the membrane is described by 
$$\ch{\textbf{A}_{$j-1$} + \textbf{A}_1 <=> \textbf{A}_{$j$}} \quad \text{for} \quad  j=2,3,...,N.$$
We also assume that the flux term $f$ describes ligand binding/unbinding to the cell surface, where the binding term will be linearly proportional to the concentrations of $U$ in the cytosol and $A_N$ in the plasma membrane. The oligomerization process is modeled as a particular version of the reversible Smoluchowski model for aggregation dynamics \cite{Bentz1981}.  We also assume that the oligomerization process occurs only by monomer attachment in the mass action regime. Moreover, to keep the analysis tractable, we do not consider any cooperativity term such as Hill's function \cite{Changeux1967}: the rate at which the different oligomers are formed is independent of their size.

\subsection{Governing equations}
\label{gov_eq_subsec}

We represent the cellular domain as the bounded region $\Omega$ with smooth boundary $\Gamma = \partial \Omega$. We define the concentrations  $u(x,t): \Omega \times (0,\mathcal{T}] \to \mathbb{R}$ for the volume component and $a_j (x,t) : \Gamma \times (0,\mathcal{T}] \to \mathbb{R}$ for the membrane oligomeric components. The molecular mechanisms underlying membrane protein aggregation and stabilization are quite complex. Therefore, we propose a mathematically tractable \emph{feedback} term to represent this complex mechanism. The flux term is thus defined as  
\begin{equation}
f(u,a_1,a_N) = (k_0 + k_b a_N) u - k_d a_1
\label{eq_feedback}
\end{equation}
for $k_0$, $k_b$ and $k_d$ positive constants, where $k_0$ is the basal binding rate,  $k_b$ is the rate of  $A_N$-dependent binding rate, and $k_d$ is the unbinding rate from the membrane into the cytosol.  Then the governing equations for the spatiotemporal evolution of the different components are given by
%\todomjh{The general PDE system}
\begin{align}
\partial_t u &= D_u \nabla^2  u \label{orig_model_pde:1} \\
\partial_t a_1 &=  D_1 \Delta a_1  + (k_0 + k_b a_N) u - k_d a_1 - 2k_m a^2_1 + 2 k_2 a_2 \nonumber \\
   &  \quad \quad - k_g a_1 \left(\sum_{l=2}^{N-1} a_l\right) + \sum_{j=3}^{N} k_j a_j \label{orig_model_pde:2} \\
\partial_t a_2 &=  D_2 \Delta a_1 + k_m a^2_1 - k_g a_1 a_2 - k_2 a_2 + k_3 a_3  \label{orig_model_pde:3} \\
\partial_t a_j &=  D_j \Delta a_j +  k_g a_1 a_{j-1}  - k_g a_1 a_j  - k_j a_j + k_{j+1} a_{j+1}, \quad j =3,\ldots,N-1 \label{orig_model_pde:4} \\
\partial_t a_N &=  D_N \Delta a_N +  k_g a_1 a_{N-1}  - k_N a_N \label{orig_model_pde:5}
\end{align}

Here, $\nabla^2$ and $\Delta$ represent the Laplace and Laplace-Beltrami  operators, respectively.   The parameter $k_m$ represents the rate at which monomers bind to form dimers. The rate $k_g$ at which the oligomers of size greater than two are formed is assumed to be the same for all oligomerization reactions.  Finally, $k_j$ represent the rates at which the oligomeric components of size $j$ will release a single monomer. The boundary condition for $a$ is periodic since the domain is closed and the boundary condition for $u$ is given by
\begin{equation}
\label{oil_boundary}
 - D_u \left(\mathbf{n}\cdot\nabla u\right ) =  (k_0 + k_b a_N) u - k_d a_1.
\end{equation}

\noindent as a balance of the diffusive flux and the reaction rate at the membrane. All parameters and variables are non-negative real numbers.
 
\subsection{Mass conservation}
\label{mass_cons_subsec}

Let $n_X$ denote the number of molecules of the component $X$. For a closed system, we know that the total number of single molecules must be given by  
\begin{equation*}
    n_U+ n_{A_1}+ 2 n_{A_2} +  ...+ N n_{A_N} 
\end{equation*}
since each $A_j$ oligomer must have exactly $j$ molecules of $A_1$. From this fact, we define the total mass of the system, which accounts for spatial compartments (bulk and surface) and the different molecular size distributions. This is the content of the following.

\begin{prop}
	Let $u$, $a_1$, $a_2$,..., $a_N$ be solutions of \eqref{orig_model_pde:1}--\eqref{oil_boundary}. Then the quantity
	\begin{equation}
	 M(t)  := \int_{\Omega} u(x,t) dx + \sum_{j=1}^{N} \left\{ j \cdot \int_{\Gamma} a_j(x,t) ds  \right\}  
	 \label{mass_cons}
	 \end{equation}
	 represents the total mass of the system and is conserved over time, i.e,  $M(t) = M_0  \quad \forall t \geq 0$. In this case, $M_0$ denotes the initial mass which is given by $M_0 = \int_{\Omega} u(x,0) dx + \sum_{j=1}^{N} \left\{ j \cdot \int_{\Gamma} a_j(x,0) ds  \right\}$
	 \label{mass_cons_prop}
\end{prop}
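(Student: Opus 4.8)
The plan is to show $M'(t)=0$ by differentiating \eqref{mass_cons} in time, substituting the governing equations \eqref{orig_model_pde:1}--\eqref{orig_model_pde:5}, and checking that the transport and reaction contributions cancel exactly. Assuming the solutions are regular enough to differentiate under the integral sign, I would write
\begin{equation*}
M'(t) = \int_{\Omega} \partial_t u \, dx + \sum_{j=1}^{N} j \int_{\Gamma} \partial_t a_j \, ds,
\end{equation*}
and then treat the diffusive (Laplacian) terms and the reactive terms separately.

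For the transport terms, I would apply the divergence theorem to the bulk equation, $\int_{\Omega} D_u \nabla^2 u \, dx = D_u \int_{\Gamma} (\mathbf{n}\cdot\nabla u)\, ds$, and use the boundary condition \eqref{oil_boundary} to replace the normal flux, obtaining exactly $-\int_{\Gamma} f(u,a_1,a_N)\, ds$ with $f$ as in \eqref{eq_feedback}. For the surface equations, the key observation is that $\Gamma=\partial\Omega$ is a closed surface without boundary (equivalently, the periodic condition on $a$), so the surface divergence theorem yields $\int_{\Gamma} D_j \Delta a_j \, ds = 0$ for every $j$. Hence all Laplace--Beltrami terms vanish and the only surviving transport contribution is the bulk flux $-\int_{\Gamma} f \, ds$.

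It then remains to account for the reaction terms. Writing $R_j$ for the reactive (non-diffusive) part of the right-hand side of the $a_j$-equation, I would establish the pointwise identity
\begin{equation*}
\sum_{j=1}^{N} j\, R_j = (k_0 + k_b a_N)u - k_d a_1 = f,
\end{equation*}
so that $\sum_{j=1}^{N} j\int_{\Gamma} R_j\, ds = \int_{\Gamma} f\, ds$ cancels the transport contribution and gives $M'(t)=0$. This identity is the heart of the argument and the step requiring the most care: it is a telescoping cancellation encoding the fact that each reaction $\textbf{A}_{j-1}+\textbf{A}_1 \rightleftharpoons \textbf{A}_j$ preserves the number of constituent monomers. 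Concretely I would group the reactive terms by type --- the dimerization pair $k_m a_1^2$ (weights $-2$ in $R_1$, $+1$ in $R_2$) and $k_2 a_2$ (weights $+2$ in $R_1$, $-1$ in $R_2$), the growth terms $k_g a_1 a_j$, and the release terms $k_j a_j$ --- and verify that the weighted coefficients sum to zero for each $j$. For example, a growth term $k_g a_1 a_j$ enters with weight $-1$ in $R_1$, weight $-j$ in $R_j$, and weight $+(j+1)$ in $R_{j+1}$, giving $-1-j+(j+1)=0$; the release terms cancel by the analogous count $+1+(j-1)-j=0$, and the only reaction term left uncancelled is the binding flux $f$, which appears solely in $R_1$.

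Combining the two computations gives $M'(t) = -\int_{\Gamma} f\, ds + \int_{\Gamma} f\, ds = 0$, so $M$ is constant and $M(t)=M_0=M(0)$ for all $t\ge 0$. I expect the main obstacle to be purely the bookkeeping of the weighted reaction sum; once that telescoping identity is in hand, the divergence-theorem steps are routine. The only secondary technical point is justifying the interchange of differentiation and integration, which follows from the assumed smoothness of the solutions.
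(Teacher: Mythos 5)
Your proposal is correct and follows essentially the same route as the paper's proof: differentiate $M(t)$, convert the bulk diffusion term to a boundary flux via the divergence theorem and the boundary condition \eqref{oil_boundary}, kill the Laplace--Beltrami integrals because $\Gamma$ is closed, and verify that the $j$-weighted reaction terms telescope to the flux $f$. The only cosmetic difference is that you isolate the cancellation as the pointwise identity $\sum_{j=1}^{N} j\,R_j = f$ before integrating, whereas the paper performs the same cancellation directly inside the surface integrals.
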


\begin{proof}
	 By taking the time derivative of $M(t)$, and assuming $u$ and $a_j$ are $\mathcal{C}^2$ solutions for \eqref{orig_model_pde:1}--\eqref{oil_boundary}, we have  
\begin{align*}
	\frac{d}{dt} M   &= \int_{\Omega} \partial_t u dx + \sum_{j=1}^{N} \left\{ j \cdot \int_{\Gamma} \partial_t a_j ds  \right\}
\end{align*}

For the integral $\int_{\Omega} \partial_t u dx$ , we apply the divergence theorem and substitute Eq. \eqref{oil_boundary} to obtain
\begin{align*}
	\int_{\Omega} \partial_t u dx  &= D_u  \int_{\Omega} \nabla^2 u dx  \\
	& =D_u  \int_{\Gamma}  (\nabla u \cdot \textbf{n}) ds   \\		
	& = -	\int_{\Gamma} [(k_0 + k_b a_N) u - k_d a_1] ds				
\end{align*}

For the summation of surface integrals $\sum_{j=1}^{N} \left\{ j \cdot \int_{\Gamma} \partial_t a_j ds  \right\}$, we substitute the governing equations to obtain
\begin{align*}
\hspace*{-0.2cm}
&\sum_{j=1}^{N} \left\{ j \cdot \int_{\Gamma} \partial_t a_j ds  \right\} \\
   &= \int_{\Gamma} \left[D_1 \Delta a_1  + (k_0 + k_b a_N) u - k_d a_1 - 2k_m a^2_1 + 2 k_2 a_2 - k_g a_1 \left(\sum_{l=2}^{N-1} a_l\right) + \sum_{j=3}^{N} k_j a_j  \right] ds  \\
   & \qquad +  \int_{\Gamma} \left[ 2 \cdot D_2 \Delta a_2 + 2 \cdot \left\{ k_m a^2_1 - k_g a_1 a_2 - k_2 a_2 + k_3 a_3 \right\}  \right] ds  \\
   & \qquad +  \sum_{j=3}^{N-1} \int_{\Gamma} \left[ j \cdot D_j \Delta a_j + j \cdot \left\{ k_g a_1 a_{j-1}  - k_g a_1 a_j  - k_j a_j + k_{j+1} a_{j+1}\right\}  \right] ds  \\
   & \qquad + \int_{\Gamma} N \cdot \left[ D_N \Delta a_N ] + N \cdot\left\{  k_g a_1 a_{N-1}  - k_N a_N \right\} \right] ds\\
   &= \sum_{j=1}^{N} j D_j \cdot \int_{\Gamma} \Delta a_j ds + 	\int_{\Gamma} (k_0 + k_b a_N) u - k_d a_1 ds \\
   &= \int_{\Gamma} \left((k_0 + k_b a_N) u - k_d a_1\right) ds
\end{align*}
where the last equality comes from the fact that   $ \int_{\Gamma} \Delta a_j ds  = 0$ as a consequence of  the First Green's Theorem \cite{Oosterom2006}. We therefore have  $$\frac{d}{dt} M =  \int_{\Omega} \partial_t u dx +\sum_{j=1}^{N} \left\{ j \cdot \int_{\Gamma} \partial_t a_j ds  \right\} = 0,$$ from which we conclude that $M(t) = M(0) =: M_0$ for all $t \geq 0$ 
\end{proof}
%\todopr{What is that weird square in the previous sentence?}
The mass conservation property for bulk-surface reaction-diffusion models has been established in different contexts \cite{Cusseddu2018,Ratz2012}. However, to the best of our knowledge, it has never been identified in the context of oligomerization reactions.

\subsection{Non-dimensionalization}
\label{nom_dim_subsec}
 
 We introduce a non-dimensional version of the system that allow convenient qualitative interpretation of the system independent of the actual system size, but instead through the ratio of kinetic parameters to the diffusion contributions. We follow the approach in \cite{Ratz2012,Ratz2015} and define $U,A_1,A_2,..., A_N$ be the dimensional concentration quantities where $[U] = \text{mol}/\mu m^3$ and $[A] = \text{mol}/\mu m^2$ for $j=1,\ldots,N$. We also define $L$ and $T$ as the spatial and temporal quantities, where $[L] =\mu m$ and $[T] = s$. We then introduce the non-dimensional variables 
 $$ \hat{u} = \frac{u}{U}, \quad \hat{a}_j = \frac{a_j}{A_j} \quad (j=1,\ldots,N) , \quad \hat{t} = \frac{t}{T}, \quad \text{and} \quad \hat{x} = \frac{x}{L}, $$ 
 which lead to the transformed domains $\hat{\Omega} := \{\xi \in \mathbb{R}^3 \vert\xi L \in \Omega\}$ and  $\hat{\Gamma} = \partial \hat{\Omega}$. By denoting $\hat{\nabla}, \hat{\nabla}^2$, and $\hat{\Delta}$ as the dimensionless gradient, Laplace, and Laplace-Beltrami operators, respectively, and using 
 $$ \nabla = \frac{1}{L} \hat{\nabla}, \quad  \nabla^2 = \frac{1}{L^2}\hat{\nabla}^2, \quad  \Delta = \frac{1}{L^2}\hat{\Delta}, $$
 we can apply the chain rule and rewrite the system \eqref{orig_model_pde:1}--\eqref{orig_model_pde:5} in the form 
\begin{align}
%%%% U
\frac{U}{T}\frac{\partial \hat{u}}{\partial \hat{t}} &= D_u \frac{U}{L^2}  \hat{\nabla}^2 \hat{u},  \quad  \hat{x} \in \hat{\Omega}, \label{nondim_model_pde:1} \\
%%%% A1
\frac{A_1}{T}\frac{\partial \hat{a}_1}{\partial \hat{t}} &=  \frac{D_1 A_1}{L^2} \hat{\Delta} \hat{a}_1 + k_d A_1 \Biggl\{ \left( \frac{k_0 U}{k_d A_1} + \frac{k_b A_N U}{k_d A_1} \hat{a}_N\right) \hat{u}  - \hat{a}_1 - 2 \frac{k_m A_1}{k_d} \hat{a}_1^2 + 2\frac{ k_2 A_2}{k_d A_1} \hat{a}_2  \nonumber \\ 
 & \qquad - \hat{a}_1 \left( \sum_{l=2}^{N-1} \frac{A_j k_g}{k_d} \hat{a}_j \right) + \sum_{j=3}^{N} \frac{k_j A_j}{k_d A_1} \hat{a}_j \Biggl\},  \quad  \hat{x} \in \hat{\Gamma},  \label{nondim_model_pde:2} \\
 %%%%% A2
\frac{A_2}{T}\frac{\partial \hat{a}_2}{\partial \hat{t}} &=  \frac{D_2 A_2}{L^2} \hat{\Delta} \hat{a}_2 + k_d A_1 \Biggl( \frac{k_m A_1}{k_d} \hat{a}_1^2 - \frac{k_2 A_2}{k_d A_1} \hat{a}_2 -  \frac{k_g A_2}{k_d} \hat{a}_1\hat{a}_2 + \frac{k_3 A_3}{k_d A_1}\hat{a}_3 \Biggl), \quad  \hat{x} \in \hat{\Gamma}, \label{nondim_model_pde:3} \\ 
%%%%% Aj
\frac{A_j}{T}\frac{\partial \hat{a}_j}{\partial \hat{t}} &=  \frac{D_j A_j}{L^2} \hat{\Delta} \hat{a}_j + k_d A_1 \Biggl(  \frac{k_g A_{j-1}}{k_d} \hat{a}_1\hat{a}_{j-1} - \frac{k_j A_j}{k_d A_1} \hat{a}_j -  \frac{k_g A_j}{k_d} \hat{a}_1\hat{a}_j + \frac{k_{j+1} A_{j+1}}{k_d A_1} \hat{a}_{j+1}\Biggl),  \nonumber \\
 &  \quad  \hat{x} \in \hat{\Gamma}, j=3,\ldots,N  \label{nondim_model_pde:4}\\ 
%%%%%% AN
\frac{A_N}{T}\frac{\partial \hat{a}_N}{\partial \hat{t}} &=  \frac{D_N A_N}{L^2} \hat{\Delta} \hat{a}_N + k_d A_1 \Biggl(  \frac{k_g A_{N-1}}{k_d} \hat{a}_1\hat{a}_{N-1} - \frac{k_N A_N}{k_d A_1} \hat{a}_N\Biggl), \quad  \hat{x} \in \hat{\Gamma}.  \label{nondim_model_pde:5}
\end{align}
The boundary conditions in \eqref{oil_boundary} can be rewritten as
\begin{equation}
 -\frac{D_u U}{L} \left(\mathbf{n}\cdot \hat{\nabla} \hat{u} \right) = k_d A_1 \left\{\left(\frac{k_{0} U}{k_{d} A_{1}}+\frac{k_{b} A_{N} }{k_{d} A_{1}} \hat{a_{N}}\right) \hat{u}-\hat{a_{1}}\right\} 
 \label{nondim_model_pde:6}
\end{equation}

Since $R>0$, we can define the  characteristic concentrations $U$ and $A_j$ by dividing the total mass of the system per total volume and surface area, respectively. We also define the characteristic time with respect to the diffusion $D_1$ of the monomeric component across the cellular surface. Formally, we define 

\begin{equation}
 U = \frac{M_0}{R \cdot |\Gamma|}, \quad A_j =  \frac{M_0}{|\Gamma|} \quad \text{for} \quad  j =1,2, \ldots, N, \quad T = \frac{R^2}{D_1},\quad   L = R
 \label{char_quan}
 \end{equation}
and the dimensionless parameters
\begin{align*}
& \qquad \hat{k}_0 = \frac{k_0 U}{k_d A_1}, \quad \hat{k}_b = \frac{k_b A_N U }{k_d A_1}, \quad \hat{k}_m = \frac{k_m A_1}{k_d}, \quad \hat{k}_j = \frac{k_j A_j}{ k_d A_1} \quad (j=2,\ldots,N),
\\
& \hat{k}_g = \frac{k_g A_j}{k_d} \quad (j=2,\ldots,N-1),   ,\quad \gamma = \frac{k_d R^2}{D_1}, \quad \tilde{D} = \frac{D_u}{D_1},  \quad  d_j = \frac{D_j}{D_1} \quad (j=2,\ldots,N).
\end{align*}
As a result,  \eqref{nondim_model_pde:1}   can be written as 
\begin{equation}
%%%% U
\frac{\partial \hat{u}}{\partial \hat{t}} = \tilde{D}   \hat{\nabla}^2 \hat{u},\label{nondim_v2:1}
\end{equation}
for $\hat{x} \in \hat{\Omega}$ with boundary condition
\begin{equation}
 - \tilde{D} \left(\mathbf{n} \cdot \hat{\nabla} \hat{u}\right) = \gamma  \left\{ \left[ \hat{k}_0 +  \hat{k}_b  \hat{a}_N\right] \hat{u}  - \hat{a}_1\right\}.\label{nondim_v2:2}
 \end{equation}
for $\hat{x} \in \hat{\Gamma}$. Finally, for the surface components, \eqref{nondim_model_pde:2}--\eqref{nondim_model_pde:5} can be written as
\begin{align}
%%%% A1
\frac{\partial \hat{a}_1}{\partial \hat{t}} &=  \hat{\Delta} \hat{a}_1 + \gamma \Biggl\{ \left[ \hat{k}_0 +  \hat{k}_b  \hat{a}_N\right] \hat{u}  - \hat{a}_1 - 2 \hat{k}_m\hat{a}_1^2 + 2\hat{k}_2\hat{a}_2 \Biggr.  \nonumber\\
   & \qquad \left. - \hat{k}_g\hat{a}_1 \left( \sum_{l=2}^{N-1} \hat{a}_j \right) + \sum_{j=3}^{N} \hat{k}_j \hat{a}_j \right\}, \label{nondim_v2:3}\\
 %%%%% A2
\frac{\partial \hat{a}_2}{\partial \hat{t}} &=  d_2 \hat{\Delta} \hat{a}_2 + \gamma \left( \hat{k}_m\hat{a}_1^2 -  \hat{k}_2\hat{a}_2 -  \hat{k}_g \hat{a}_1\hat{a}_2 +\hat{k}_2 \hat{a}_3 \right), \label{nondim_v2:4} \\
%%%%% Aj
\frac{\partial \hat{a}_j}{\partial \hat{t}} &=  d_j \hat{\Delta} \hat{a}_j + \gamma \left(\hat{k}_g \hat{a}_1\hat{a}_{j-1} - \hat{k}_j \hat{a}_j -  \hat{k}_g\hat{a}_1\hat{a}_j + \hat{k}_{j+1} \hat{a}_{j+1}\right), \quad j=3,\ldots, N  \label{nondim_v2:5}\\
%%%%%% AN
\frac{\partial \hat{a}_N}{\partial \hat{t}} &=  d_N \hat{\Delta} \hat{a}_N + \gamma \left(\hat{k}_g  \hat{a}_1\hat{a}_{N-1} - \hat{k}_N \hat{a}_N\right). \label{nondim_v2:6}
\end{align}
 
\subsection{System Reduction when $D_u \to \infty$}
\label{sym_red_sec}

We further reduce our system by assuming that the cytosolic diffusion coefficient is much larger than the lateral diffusion on the membrane as has been experimentally observed \cite{postma2004}. The resulting system is uniquely defined on the membrane surface, and the bulk variable $u$ will be represented by an integral operator also called a \emph{non-local functional}. Our approach closely follows the work of Ratz and Roger \cite{Ratz2012,Ratz2015}, though our system can be $N$-dimensional in principle. Formally, if we assume $D_u \to \infty$ and if the initial concentration of $\hat{u}$ is constant over $\hat{\Omega}$, then $\hat{u}$ no longer depends on space and $u = u(t)$.  Therefore, the mass conservation law given by \eqref{mass_cons} implies 
\begin{equation}
 \hat{u}(t) |\hat{\Omega}| + \sum_{j=1}^N \left\{ j \cdot \int_{\hat{\Gamma}} \hat{a}_j ds \right\} = \mathcal{M}_0
 \label{mass_cons_ND}
 \end{equation}
 where $\mathcal{M}_0 = \hat{u}(0) |\hat{\Omega}| + \sum_{j=1}^N \left\{ j \cdot \int_{\hat{\Gamma}} \hat{a}_j(s,0) ds \right\} $ is the total mass of the dimensionless system. We then  define the non-local functional  $$ \mathcal{U}[\hat{a}_1,\hat{a}_2,...,\hat{a}_N](t) : = \frac{1}{|\hat{\Omega}|} \left[ \mathcal{M}_0 -  \sum_{j=1}^N \left\{ j \cdot \int_{\hat{\Gamma}} \hat{a}_j ds\right\}  \right ]$$

as in \cite{Ratz2012,Ratz2015}. Finally, we drop all the hats to obtain the reduced system
%\todomjh{The reduced PDE system}
\begin{align}
%%%% A1
\frac{\partial a_1}{\partial t} &=  \Delta a_1 + \gamma  \mathcal{F}_1(a_1,a_2 \dots ,a_N) \label{nondim:1} \\
 %%%%% Aj
\frac{\partial a_j}{\partial t} &=  d_j \Delta a_j + \gamma \mathcal{F}_j(a_1,a_2 \dots ,a_N), \label{nondim:2} \quad j=2,\ldots,N 
\end{align}
where 
\begin{align*}
\mathcal{F}_1 &=  \left[ k_0 +  k_b  a_N\right] \mathcal{U}[a_1,a_2,...,a_N]   - a_1 - 2 k_m a_1^2 + 2k_2a_2  - k_g a_1 \left( \sum_{l=2}^{N-1} a_j \right) + \sum_{j=3}^{N} k_j a_j , \\
\mathcal{F}_2 &=   k_m a_1^2 -  k_2 a_2 -  k_g a_1 a_2 +k_2 a_3 , \\
\mathcal{F}_j &=  k_g a_1 a_{j-1} - k_j a_j -  k_g a_1a_j + k_{j+1} a_{j+1}, \quad j=3,\ldots,N \\ 
\mathcal{F}_N &= k_g  a_1 a_{N-1} - k_N a_N.
\end{align*}

In the next sections, we will provide analytical estimates and numerical simulations to analyze the stability properties of the reduced system  \eqref{nondim:1}--\eqref{nondim:2}.

\section{Mathematical Analysis}
\label{sec:math_analysis_sec}

In this section, we present the mathematical framework for investigating diffusion-driven-instabilities in the system \eqref{nondim:1}--\eqref{nondim:2}. We establish conditions that guarantee the existence and uniqueness of homogeneous steady-states, or the conditions for having multiple steady-states. We also present a characterization for the Jacobian Matrix in the case of homogeneous perturbations. For the non-homogeneous case, the linearization of the non-local functional yields a different Jacobian matrix, and a family of ordinary differential equations is derived to analyze the stability in terms of the eigenfunctions of the Laplace-Beltrami operator. We then apply our framework in the case $N=2$, where we obtain a necessary condition for diffusion-driven instabilities.  Finally, we make some remarks about basic questions concerning the solution theory for the reduced system~\eqref{nondim:1}--\eqref{nondim:2}, as well as the more general model~\eqref{orig_model_pde:1}--\eqref{oil_boundary}.
 
We start with the characterization of the homogeneous steady-states.

\subsection{Homogeneous steady-states}
\label{hom_ss_subsec}

The homogeneous solutions of \eqref{nondim:1}--\eqref{nondim:2} satisfy the ODE system
$$ \frac{d a_j}{dt} = \gamma \mathcal{F}_j(a_1,a_2 \dots ,a_N) \quad j=1,\ldots,N $$ 
and the steady-states in this case are given by $\textbf{a}^* = (a^*_1,a^*_2,a^*_3,\dots,a^*_N)$ such that $$\mathcal{F}_j (\textbf{a}^*) = 0$$ for all $j=1,\ldots,N$. From $ \mathcal{F}_N (\textbf{a}^*) = 0$, we obtain   $ a^*_N  = \frac{k_g a^*_1 a^*_{N-1}}{k_N}$ and proceeding recursively, it is easy to show that 
\begin{equation}
a^*_j  = \frac{k_g a^*_1 a^*_{j-1}}{k_j}  \quad \text{for} \quad j=3,\ldots,N,\quad \text{and} \quad a^*_2  = \frac{k_m (a^*_1)^2}{k_2}.
\label{ajs}
\end{equation}
Hence $a^*_j = C_j(a^*_1)^j$  where $C_1=1$ and
$$ C_j =  \left(\prod_{i=3}^j\frac{k_g}{k_j}  \right)\left(\frac{k_m}{k_2}\right)  \quad \text{for} \quad j=2,\ldots,N.$$ 
Thus from  $ \mathcal{F}_1 (\textbf{a}^*) = 0$, we must have 
\begin{align}
       a^*_1 &= \left[k_0 + k_b \  a^*_N\right]\frac{1}{|\Omega|} \left[ \mathcal{M}_0 -  |\Gamma |\sum_{j=1}^N j \cdot  a^*_j   \right] \nonumber \\
             &=  \left[k_0 + k_b  \  C_N (a^*_1)^N\right]\frac{1}{|\Omega|} \left[ \mathcal{M}_0 -  |\Gamma |\sum_{j=1}^N j \cdot  C_j (a^*_1)^j   \right]. 
      \label{fzero}
\end{align}

By multiplying both sides by $|\Omega|$ and rearranging the $(a^*_1)^j$ terms, we can define the polynomial 
\begin{align}
     \mathcal{P}_N(\alpha)  &=   - k_0 \mathcal{M}_0  \ + \  \left(|\Omega| \ + \  k_0 |\Gamma|\right)\alpha   \ + \  k_0 |\Gamma|\left(  \sum_{j=2}^{N-1} j C_j \alpha^j \right)   \nonumber \\
     & + \   C_N \left( k_0 |\Gamma| N  - \mathcal{M}_0 k_b  \right) \alpha^N \ + \ k_b |\Gamma| C_N \left(\sum_{j=1}^{N} j C_j \alpha^{N+j} \right)
     \label{PN}
\end{align}

     where the roots of $\mathcal{P}_N$ are the steady-state values $a^*_1$. We then observe that the coefficient of $\alpha^N$ is a non-negative number if and only if $$ k_0 |\Gamma| N  - \mathcal{M}_0 k_b \geq 0,$$ which in this case implies that $\mathcal{P}_N(\alpha)$ has a unique positive root and therefore that the system has a unique steady-state. This is the case when $k_b=0$, which means that the largest oligomers do not promote ligand binding in the plasma membrane. On the other hand, if  $ k_0 |\Gamma| N  - \mathcal{M}_0 k_b < 0$, then multiple steady-states could exist.

\subsection{Linear Stability Analysis}
\label{lin_stab_sc}

Linear stability is a traditional concept from the theory of dynamical systems that treats the study of the local behavior near a steady-state solution. The term ``linear'' stands for the analysis of the linear approximation of a nonlinear system, which can be sufficient to determine if a steady-state is stable or unstable. In the case of a system of ODEs, the analysis is carried out by evaluating the eigenvalues of the so-called Jacobian matrix. A similar analysis can be carried in the context of reaction-diffusion systems of PDEs with the analysis of the eigenvectors of the Laplace operator. A major contribution in this field is due to Alan Turing in the classic paper \emph{``The Chemical Basis of Morphogenesis''} \cite{turing1952}. Turing established the notion of diffusion-driven instabilities and was the first to connect this mathematical idea with the formation of spatially heterogeneous patterns. In what follows, we first analyze the homogeneous perturbations of the steady-states by describing the Jacobian matrix of the system. Then we apply the Turing framework and define the conditions for diffusion-driven instabilities in our system \eqref{nondim:1}--\eqref{nondim:2}.

\subsubsection{Homogeneous perturbations}
\label{hom_pert_subsec}

In this section, we investigate the linear stability of the steady-states $\textbf{a}^*$ against spatially homogeneous perturbations, that is in the absence of diffusion. Our study is an N-dimensional version of the approach taken in \cite{Ratz2012,Ratz2015} for a GTPase cylcling model.  We need to compute the eigenvalues $\lambda$ of the Jacobian matrix  $$\mathcal{J} [\textbf{a}^*] = \gamma \left[\frac{\partial \mathcal{F}^*_j}{\partial a_i}\right]_{1 \leq i,j \leq N}$$ for $\mathcal{F}_j$ defined in  \eqref{nondim:1} and \eqref{nondim:2}.  If all the eigenvalues of $\mathcal{J} [\textbf{a}^*]$ have negative real parts, then the steady-state is called linearly stable \cite{strogatz1994}. That means that local perturbations will converge to the steady-state.

On the other hand, if at least one of the eigenvalues has a positive real part, then it is called linearly unstable, which that local perturbations will lead the system away from the steady-state.  The next proposition generally characterizes  $\mathcal{J} [\textbf{a}^*] - \lambda \textbf{I}$. 
\\
\begin{prop}
 The matrix $ \mathcal{J}[\textbf{a}^*] - \lambda \textbf{I}$ can be written in the form
$$  
\begin{bmatrix}
w_0 - \lambda & \textbf{w} \\
\textbf{v}   & H - \lambda \textbf{I}
\end{bmatrix}
$$
where $w_0$ and $\lambda$ are real numbers, $\textbf{w} \in \mathbb{R}^{N-1}$ is a row vector, $\textbf{v} \in \mathbb{R}^{N-1}$ is a column vector, and $H$ is a $(N-1)\times(N-1)$ tridiagonal matrix. 
\end{prop}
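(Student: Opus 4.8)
The plan is to prove the statement by a direct computation of the partial derivatives $\partial \mathcal{F}_j/\partial a_i$ that make up $\mathcal{J}[\mathbf{a}^*]$, and then to read off the asserted block structure. First I would partition the index set $\{1,\dots,N\}$ into the singleton $\{1\}$ and the tail $\{2,\dots,N\}$; this induces the $2\times 2$ block decomposition of $\mathcal{J}[\mathbf{a}^*]$, with the scalar $w_0 := \gamma\,\partial_{a_1}\mathcal{F}_1(\mathbf{a}^*)$ in the top-left corner, the row vector $\mathbf{w}$ collecting the remaining first-row entries, the column vector $\mathbf{v}$ collecting the remaining first-column entries, and the $(N-1)\times(N-1)$ matrix $H := \gamma\,[\partial_{a_i}\mathcal{F}_j(\mathbf{a}^*)]_{2\le i,j\le N}$ in the bottom-right corner. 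Since all partial derivatives of smooth functions evaluated at a fixed point are real numbers, the facts $w_0\in\mathbb{R}$ and $\mathbf{w},\mathbf{v}\in\mathbb{R}^{N-1}$ are automatic, and subtracting $\lambda\mathbf{I}$ modifies only the two diagonal blocks, producing exactly the displayed matrix. Thus the only nontrivial assertion is that $H$ is tridiagonal.

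For the scalar and vector blocks I would simply record the relevant derivatives without exploiting any structure. The one point requiring care is the non-local functional $\mathcal{U}[a_1,\dots,a_N]$: because the perturbations are spatially homogeneous, $\int_{\hat{\Gamma}} a_j\,ds = |\hat{\Gamma}|\,a_j$, so $\mathcal{U}$ reduces to the genuine function $\tfrac{1}{|\hat{\Omega}|}(\mathcal{M}_0 - |\hat{\Gamma}|\sum_j j\,a_j)$ with $\partial_{a_i}\mathcal{U} = -i\,|\hat{\Gamma}|/|\hat{\Omega}|$. The key observation is that $\mathcal{U}$ enters \emph{only} through $\mathcal{F}_1$, so its derivatives contribute only to $w_0$ and to the first column $\mathbf{v}$, and in particular never reach the block $H$. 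This is what prevents the non-local coupling from destroying the sparsity of the tail block.

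The heart of the proof is the tridiagonality of $H$, which I expect to be the main (though mild) obstacle, purely as a matter of careful bookkeeping. The point is that the Smoluchowski oligomerization with monomer attachment couples each oligomer only to its nearest neighbors in size: for $j\ge 2$ the reaction term $\mathcal{F}_j$ depends on the tail variables $a_2,\dots,a_N$ solely through $a_{j-1}$, $a_j$, and $a_{j+1}$ (with the obvious truncation at $j=2$ and $j=N$). Concretely, $\partial_{a_i}\mathcal{F}_2 = -(k_2+k_g a_1)\delta_{i,2} + k_2\,\delta_{i,3}$, while for $3\le j\le N-1$ one gets $\partial_{a_i}\mathcal{F}_j = k_g a_1\,\delta_{i,j-1} - (k_j+k_g a_1)\delta_{i,j} + k_{j+1}\,\delta_{i,j+1}$, and $\partial_{a_i}\mathcal{F}_N = k_g a_1\,\delta_{i,N-1} - k_N\,\delta_{i,N}$. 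Hence $\partial_{a_i}\mathcal{F}_j = 0$ whenever $|i-j|\ge 2$ with $i,j\ge 2$, which is exactly the statement that $H$ has nonzero entries only on its main, super-, and sub-diagonals. Evaluating these expressions at $\mathbf{a}^*$ and multiplying by $\gamma$ then yields the tridiagonal $H$. The remaining care is only to track the index truncations at the two endpoints and to confirm that the $a_1$-derivatives (which do produce a generically dense first row and column) are correctly assigned to $\mathbf{w}$ and $\mathbf{v}$ rather than to $H$.
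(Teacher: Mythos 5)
Your proposal is correct and follows essentially the same route as the paper: a direct computation of all partial derivatives $\partial \mathcal{F}_j/\partial a_i$ evaluated at $\mathbf{a}^*$, followed by reading off the block decomposition, with the tridiagonality of $H$ coming from the nearest-neighbor (monomer-attachment) coupling of the oligomerization terms and the fact that the non-local functional $\mathcal{U}$ appears only in $\mathcal{F}_1$. Incidentally, your formula $\partial_{a_i}\mathcal{F}_N = k_g a_1\,\delta_{i,N-1} - k_N\,\delta_{i,N}$ corrects a small typo in the paper's own proof, which records the entry $k_g a_1^*$ under $\partial \mathcal{F}^*_N/\partial a_2$ instead of $\partial \mathcal{F}^*_N/\partial a_{N-1}$.
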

    
    \begin{proof}
    We will first calculate $\frac{\partial \mathcal{F}^*_j}{\partial a_i}$ for $i,j =1,2,\ldots,N$ . For $j=1$ we obtain
   \begin{align*}
      \frac{\partial \mathcal{F}^*_1}{\partial a_1} &= - \frac{|\Gamma| \left(k_0 + k_b a^*_N\right) }{|\Omega|} - 1 - 4k_m a^*_1 - \sum_{l=2}^{N-1} k_g a^*_l, \\ 
      \frac{\partial \mathcal{F}^*_1}{\partial a_2} &=  - 2 \frac{|\Gamma| \left(k_0 + k_b a^*_N\right)}{|\Omega|}  - k_g a^*_1 + 2k_2, \\
    \frac{\partial \mathcal{F}^*_1}{\partial a_i} &= - i \frac{|\Gamma| \left(k_0 + k_b a^*_N\right)}{|\Omega|}  - k_g a^*_1 + k_i  \quad \text{for} \quad  i=3,4, \ldots ,N-1, \\
    \frac{\partial \mathcal{F}^*_1}{\partial a_N} &=  - N \frac{|\Gamma|  \left(k_0 + k_b a^*_N\right)}{|\Omega|}+ \frac{k_b}{|\Omega|} \left( \mathcal{M}_0 -  |\Gamma| \sum_{j=1}^N  j \cdot  a^*_j   \right) + k_N.
     \end{align*}
     
     Now for $j=2$, we have
     $$ \frac{\partial \mathcal{F}^*_2}{\partial a_1} = 2 k_m a^*_1 - k_g a^*_2, \quad \frac{\partial \mathcal{F}^*_2}{\partial a_2} =   -  k_g a^*_1 - k_2, \quad \frac{\partial \mathcal{F}^*_2}{\partial a_3} = k_3  \quad \frac{\partial \mathcal{F}^*_2}{\partial a_i} = 0, \quad i=4,5, \ldots, N   $$
    and  for $j=3$ to $j= N-1$, we obtain
       $$\frac{\partial \mathcal{F}^*_j}{\partial a_1} = k_g a^*_{j-1} - k_g a^*_j, \quad \frac{\partial \mathcal{F}^*_j}{\partial a_{j-1}} =  k_g a^*_1, \quad \frac{\partial \mathcal{F}^*_j}{\partial a_{j}} = - k_g a^*_1 - k_j, \quad \frac{\partial \mathcal{F}^*_j}{\partial a_{j+1}} =  k_{j+1}, $$
    and 
    $$ \frac{\partial \mathcal{F}^*_j}{\partial a_{i}} = 0, $$
   otherwise, and finally for $j=N$,
     $$ \frac{\partial \mathcal{F}^*_N}{\partial a_1} = k_g a^*_{N-1},
        \quad \frac{\partial \mathcal{F}^*_N}{\partial a_2} =  k_g a^*_1  ,
     \quad \frac{\partial \mathcal{F}^*_N}{\partial a_N} =  - k_N,
   \quad  \text{and} \quad \frac{\partial \mathcal{F}^*_N}{\partial a_i} = 0 \quad \text{otherwise.}$$
 We then define $\mathcal{J}^*_{i j}:= \gamma \  \frac{\partial \mathcal{F}^*_j}{\partial a_{i}}$, $w_0 := \mathcal{J}^*_{11} - \lambda$, the vectors $\textbf{v}, \textbf{w} \in \mathbb{R}^{N-1}$ such that
 $$ \textbf{v} = \left(  \mathcal{J}^*_{21}  \  \mathcal{J}^*_{31} \cdots \mathcal{J}^*_{N1} \right)^T  \quad \text{and}  \quad  \textbf{w} = \left(  \mathcal{J}^*_{12}  \  \mathcal{J}^*_{13} \cdots \mathcal{J}^*_{1N} \right) $$
  and 
  $$ \setlength{\tabcolsep}{3em}
H = \begin{bmatrix}
  \mathcal{J}^*_{22} - \lambda & \mathcal{J}^*_{23} &0& \cdots &0 & 0 & 0   \\[8pt]
 \mathcal{J}^*_{32} &  \mathcal{J}^*_{33} - \lambda & \mathcal{J}^*_{34} & \cdots & 0 & 0 & 0 \\[8pt]
  \vdots & \vdots & \vdots & \vdots & \vdots  & \vdots & \vdots \\[8pt]
  0 & 0 & 0& \cdots &  \mathcal{J}^*_{N-1N-2} & \mathcal{J}^*_{N-1N-1} - \lambda & \mathcal{J}^*_{N-1N}\\[8pt]
  0 & 0& 0& \cdots & 0 &\mathcal{J}^*_{N N-1} &\mathcal{J}^*_{NN} -\lambda 
\end{bmatrix}_{(N-1) \times (N-1)} $$
 which proves the proposition.
      \end{proof}

\color{black}
\subsubsection{Non-homogeneous perturbations}
\label{non_hom_pert_subsec}

    We now consider a perturbation of the form $\textbf{a}_s = (a_{s,1}, a_{s,2}, ..., a_{s,N})$  for $s \in (-1,1)$ of the homogeneous steady-state $\textbf{a}^*$ in the direction of $\Phi = \left(\varphi_1, \varphi_2, ..., \varphi_N\right)$, for non-homogeneous  $\varphi_j: \Gamma \times (0,T) \to \mathbb{R}$.  Thus for each component, we assume
    $$ a_{s,j} \vert_{s=0} = a^*_j \quad \text{and} \quad \frac{\partial a_{s,j}}{\partial s}\bigg\vert_{s=0}  = \varphi_j, $$
    so we may write the linear approximation $a_{s,j} \approx a^*_j + s \varphi_j$ for  $j=1,\ldots,N$
    $$ a_{s,j} = a^*_j +  s  \  \varphi_j(x,t). $$
    In particular, the linearization of the non-local functional yields $\mathcal{U}[\textbf{a}_s]  \approx  \mathcal{U}[\textbf{a}^*] + s ( \frac{d}{ds}\big\vert_{s=0} \  \mathcal{U}[\textbf{a}_s] )$ where
    \begin{equation}
    \left( \frac{d}{ds}\bigg\vert_{s=0} \mathcal{U}[\textbf{a}_s] \right) = - \sum_{j=1}^N \frac{d}{ds} \bigg\vert_{s=0} \int_{\Gamma} a_{s,j} ds
    = - \sum_{j=1}^N \int_{\Gamma} \varphi_j ds
    \end{equation}
 
 Since we assume that $\varphi_j \in L^2(\Gamma)$ are orthogonal to the constant perturbations, which were analyzed in the previous section, we now consider $$ \int_{\Gamma} \varphi_j ds = 0 \quad \text{for} \quad j=1,\ldots,N, $$
which leads to a linearized system with a constant input $\mathcal{U}[\textbf{a}_s](t) = \mathcal{U}[\textbf{a}^*]$. For the approximation of the component $a_1$, we thus have
\begin{equation}
\partial_t \varphi_1 =   \Delta \varphi_1  + \sum_{j=1}^{N}\tilde{\mathcal{J}}_{1,j}(\textbf{a}^*)  \varphi_j,  \label{a1_diff}
\end{equation}
where 
$$\tilde{\mathcal{J}}_{1,1} =  - \gamma \left\{ 1  + 4k_ma^*_1 + k_g \left(\sum_{l=2}^{N-1} a^*_l\right)\right\}, \quad \tilde{\mathcal{J}}_{1,2}(\textbf{a}^*) =  \gamma \left(2k_2 - k_g a^*_1\right), $$
$$\tilde{\mathcal{J}}_{1,j}(\textbf{a}^*) = \gamma \left(k_j - k_g a^*_j\right), \quad j=3,\ldots, N-1, \quad \text{and} \quad \tilde{\mathcal{J}}_{1,N}(\textbf{a}^*) = \gamma \left(k_N + k_b \ \mathcal{U}[\textbf{a}^*]\right).   $$

The other terms of the Jacobian matrix remain the same as in the case of the homogeneous perturbations, so we omit the explicit calculations. In vector notation, we can then write the linearized system in the form
\begin{equation}
 \partial_t \Phi = \textbf{D} \Delta \Phi + \tilde{\mathcal{J}}(\textbf{a}^*)  \Phi, 
 \label{lindiff}
 \end{equation}
where $\textbf{D}$ is a diagonal matrix such that  $\textbf{D}_{jj} = d_j$ where $d_1 =1$ and  $\tilde{\mathcal{J}}(\textbf{a}^*)$ is the modified Jacobian matrix. We then define $\mathbb{N}_0: = \mathbb{N} \cup \{0\}$ and consider $(\omega_l)_{l \in \mathbb{N}_0} \subset L^2(\Gamma)$, an orthonormal basis of infinitely smooth eigenfunctions of the Laplace-Beltrami operator, i.e, 
$$ -\Delta \omega_l = \eta_l \omega_l.\quad \text{where} \quad 0 = \eta_0 < \eta_1 \leq \eta_2 \leq \cdots.$$
 Then for each $j=1,\ldots,N $ we can express each component $\varphi_j$ as a linear combination 
$$ \varphi_j  =  \alpha_{j0} \  \omega_0  + \sum_{i \in \mathbb{N}}   \alpha_{jl} \  \omega_l   $$
where $\alpha_{jl} = \alpha_{jl}(t)$ for $l \in \mathbb{N}_0$ . Using vector notation, we can define the quantity $\mathcal{A}_l = (\alpha_{1l},\alpha_{2l}, \ldots, \alpha_{Nl})^T $ such that 
$$ \Phi = \mathcal{A}_0 \  \omega_0 + \sum_{l \in \mathbb{N}} \mathcal{A}_i \  \omega_l(x).$$ 

 By substituting the above expansion in \eqref{lindiff}, we obtain the linear ODE system
\begin{equation}
    \frac{d\mathcal{A}_l}{dt}  = \left[- \eta_l \textbf{D} +  \gamma \tilde{\mathcal{J}}(\textbf{a}^*) \right] \mathcal{A}_l \quad \text{for}\quad  l=0,1,2,\ldots
    \label{Ais}
\end{equation}
and diffusion-driven instabilities occur if the above system is unstable for some $l \in \mathbb{N}_0$. This is true when at least one eigenvalue $\lambda$ of the matrix $- \eta_l \textbf{D} +  \tilde{\mathcal{J}}(\textbf{a}^*)$ has a positive real part.  Therefore our target quantity is the so-called \emph{dispersion relation} 
\begin{equation}
h(l): = \max \left( \operatorname{Re}(\lambda(\eta_l)) \right),
\label{disp_rel}
\end{equation}
 where $\operatorname{Re}(z)$ denotes the real part of a complex number $z$.  Finally, the characteristic polynomials $p_l(\lambda): = \det(\lambda \textbf{I} - \gamma \tilde{\mathcal{J}}(\textbf{a}^*) + \eta_l \textbf{D})$ can be written in the form 
$$ p_l(\lambda) = \lambda^N + b_{l,N-1} \lambda^{N-1} + ... +  b_{l,0} $$
where $ b_{l,0} = det(  - \gamma \tilde{\mathcal{J}}(\textbf{a}^*) + \eta_l \textbf{D})$. Therefore, if $b_{l,0} < 0 $ for some $l \in \mathbb{N}$, then $p_l$ has a positive root and therefore $h(l)>0$.

\subsection{Special Case $N=2$: Necessary Conditions for Diffusion-Driven Instabilities}

We now fix $N=2$ and analyze the conditions for diffusion-driven instabilities.  The equations are given by
\begin{align}
\partial_t a_1 &=   \Delta a_1 + \gamma  \left\{ \frac{(k_0 + k_b a_2)}{|\Omega|} \left[ \mathcal{M}_0 -   \int_{\Gamma} \left(a_1 + 2 a_2\right) ds \right ]  -   a_1  - 2k_m a^2_1 + 2 k_2 a_2\right\} \label{red_sys_N2:1}\\
\partial_t a_2 &=  d_2 \Delta a_2 +  \gamma  \left\{k_m a^2_1 - k_2 a_2\right\}. \label{red_sys_N2:2}
\end{align}
 We provide a necessary condition in a special case where the system admits a unique spatially homogeneous steady-state. We prove that the system does not exhibit diffusion-driven instabilities provided that $k_b$ is sufficiently small. 

\begin{thm}
Suppose $k_b \geq 0$ is such that $$k_b \leq \frac{2}{\mathcal{M}_0}\min\left\{ k_0 |\Gamma|, \frac{ d_2 \eta_i |\Omega|}{\gamma}\right\} $$ for all $i \in \mathbb{N}$. Then the system admits a unique steady-state and no diffusion-driven instability exists. 
\label{thm1}
\end{thm}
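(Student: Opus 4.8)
The statement packages two logically independent claims, and the plan is to treat them separately and then check that the single hypothesis on $k_b$ implies both. The first claim, uniqueness of the homogeneous steady state, is immediate from the polynomial analysis of Section~\ref{hom_ss_subsec}. Specializing $\mathcal{P}_N$ in \eqref{PN} to $N=2$, its $\alpha^2$-coefficient equals $C_2\left(2k_0|\Gamma|-\mathcal{M}_0 k_b\right)$, which is nonnegative exactly when $k_b\le 2k_0|\Gamma|/\mathcal{M}_0$. Since the hypothesis forces $k_b\le \tfrac{2}{\mathcal{M}_0}\min\{\,\cdot\,\}\le 2k_0|\Gamma|/\mathcal{M}_0$, this holds, and then every coefficient of $\mathcal{P}_2$ except the constant term $-k_0\mathcal{M}_0$ is nonnegative. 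Descartes' rule of signs then gives exactly one positive root $a^*_1$, hence a unique positive steady state $\textbf{a}^*$.

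For the absence of a diffusion-driven instability I would reduce everything to the sign of $b_{l,0}=\det\!\left(-\gamma\tilde{\mathcal{J}}(\textbf{a}^*)+\eta_l\textbf{D}\right)$ introduced after \eqref{disp_rel}. The first point is structural: both diagonal entries of $\tilde{\mathcal{J}}$ are negative (here $\tilde{\mathcal{J}}_{1,1}=-\gamma(1+4k_m a^*_1)$ and $\tilde{\mathcal{J}}_{2,2}=-\gamma k_2$, the intermediate terms being absent at $N=2$), and subtracting $\eta_l\textbf{D}$ only makes them more negative, so the trace of the mode matrix in \eqref{Ais} is strictly negative for every $l$. For a $2\times2$ matrix this means a positive-real-part eigenvalue can arise only through $b_{l,0}<0$, so it suffices to prove $b_{l,0}\ge 0$ for all $l$. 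A direct expansion gives $b_{l,0}=\gamma^2k_2+\gamma\eta_l\big[d_2(1+4k_m a^*_1)+k_2\big]+d_2\eta_l^2-2\gamma^2k_m k_b\,a^*_1\,\mathcal{U}[\textbf{a}^*]$, in which only the last term, proportional to $k_b$, can be negative.

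The argument then splits by wavenumber, and this is where the two entries of the minimum appear. For the homogeneous mode $\eta_0=0$ I would use the steady-state identities $k_m(a^*_1)^2=k_2 a^*_2$ and $a^*_1=(k_0+k_b a^*_2)\mathcal{U}[\textbf{a}^*]$ to collapse $b_{0,0}$ to $\gamma^2 k_2\,(k_0-k_b a^*_2)/(k_0+k_b a^*_2)$, so positivity is equivalent to $k_b a^*_2\le k_0$; mass conservation \eqref{mass_cons_ND} gives the a priori bound $2|\Gamma|a^*_2\le\mathcal{M}_0$, and combined with $k_b\le 2k_0|\Gamma|/\mathcal{M}_0$ this yields $k_b a^*_2\le k_0$. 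For $l\ge1$ I would instead pair the destabilizing term against the diffusive term it naturally dominates: using $\mathcal{U}[\textbf{a}^*]\le\mathcal{M}_0/|\Omega|$ together with $k_b\le\tfrac{2}{\mathcal{M}_0}\,d_2\eta_l|\Omega|/\gamma$ gives $2\gamma^2 k_m k_b a^*_1\,\mathcal{U}[\textbf{a}^*]\le 4\gamma d_2\eta_l k_m a^*_1$, which is precisely the summand $\gamma\eta_l\cdot 4k_m a^*_1 d_2$ already present in $b_{l,0}$; cancelling it leaves $b_{l,0}\ge\gamma^2 k_2+\gamma\eta_l(d_2+k_2)+d_2\eta_l^2\ge0$. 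Taking the stronger of the two regimes over all $i$ reproduces exactly the threshold $k_b\le\tfrac{2}{\mathcal{M}_0}\min\{k_0|\Gamma|,\,d_2\eta_i|\Omega|/\gamma\}$.

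The one genuine difficulty, and the step I expect to demand the most care, is that $k_b$ enters both explicitly in $\tilde{\mathcal{J}}_{1,2}$ and implicitly through the $k_b$-dependence of $\textbf{a}^*$, so none of the inequalities above may be read off while pretending $\textbf{a}^*$ is fixed. The resolution is to convert every implicit steady-state quantity into an explicit geometric constant using the conserved mass, namely $a^*_2\le\mathcal{M}_0/(2|\Gamma|)$ and $\mathcal{U}[\textbf{a}^*]\le\mathcal{M}_0/|\Omega|$; these are the only facts about $\textbf{a}^*$ that the proof actually needs, and they are what allow a single threshold on $k_b$ to control the homogeneous mode and every diffusive mode simultaneously. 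A last bookkeeping point to verify is the degeneracy at $N=2$, where the dimer is also the largest oligomer: one must confirm that the feedback contribution $k_b\mathcal{U}[\textbf{a}^*]$ is collected into the $(1,2)$ entry, so that the generic formulas for $\tilde{\mathcal{J}}_{1,2}$ and $\tilde{\mathcal{J}}_{1,N}$ coincide correctly rather than double-count.
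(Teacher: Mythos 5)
Your proposal is correct, and it follows essentially the same route as the paper's own proof: uniqueness via the sign pattern of the coefficients of $\mathcal{P}_2$ (the paper invokes the intermediate value theorem where you invoke Descartes' rule, which is if anything slightly cleaner), then mode-by-mode stability of the $2\times 2$ matrices in \eqref{Ais} via a negative trace plus a nonnegative determinant, with exactly the same two mass-conservation bounds ($\mathcal{U}[\textbf{a}^*]\le \mathcal{M}_0/|\Omega|$ for the modes $l\ge 1$, and $a^*_2\le \mathcal{M}_0/(2|\Gamma|)$ for $l=0$) and the same steady-state collapse $\mathcal{U}[\textbf{a}^*]=a^*_1/(k_0+k_b a^*_2)$ that reduces the $l=0$ determinant to $\gamma^2 k_2\,(k_0-k_b a^*_2)/(k_0+k_b a^*_2)$; your closed-form determinant agrees with the paper's $c_l$ after expanding $\mathcal{U}$. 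The one substantive difference in scope: the paper's proof also verifies stability against spatially \emph{homogeneous} perturbations using the full non-local Jacobian $\mathcal{J}[\textbf{a}^*]$, whose entries differ from $\tilde{\mathcal{J}}(\textbf{a}^*)$ because constant perturbations do shift the non-local functional (producing the extra $|\Gamma|/|\Omega|$ terms), and shows its determinant is positive using $2k_0|\Gamma|-\mathcal{M}_0 k_b\ge 0$. You skip this step entirely. That omission does not sink the stated claim — a diffusion-driven instability is by definition a destabilization by non-constant modes, and (per the paper's own definition via \eqref{Ais}) you have covered every $l\in\mathbb{N}_0$ — but it means you prove a slightly weaker conclusion than the paper does, namely stability against mean-zero perturbations rather than full linear stability of the unique steady state; if the theorem is read as asserting the latter, you would need to add the short computation with $\mathcal{J}[\textbf{a}^*]$.
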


\begin{proof}
Let $\textbf{a}^* =\left (a^*_1, a^*_2\right)$ be the spatially-homogeneous steady $\textbf{a}^* =\left (a^*_1, a^*_2\right)$, which is obtained when $a^*_2 =  \frac{k_m (a^*_1)^2}{k_2}$ and $a^*_1$ is a solution of  $\mathcal{P}_2(\alpha) = 0$, where  
\begin{align*}
 \mathcal{P}_2(\alpha) &=  - k_0 \mathcal{M}_0  + \left(  |\Omega| + k_0 |\Gamma|\right)\alpha  + \frac{k_m}{k_2} \left( 2 k_0 |\Gamma|   - \mathcal{M}_0 k_b  \right) \alpha^2 \\
& \qquad + k_b |\Gamma|\frac{k_m}{k_2} \alpha^{3} + 2 \left(\frac{k_m}{k_2}\right)^2  k_b |\Gamma| \alpha^4.
 %\label{p2_alpha}
\end{align*}

Now, since  $k_b \leq  \frac{2k_0 |\Gamma|}{\mathcal{M}_0}$, we have $ 2 k_0 |\Gamma|   - \mathcal{M}_0 k_b \geq 0 $, and therefore the intermediate value theorem ensures that the system  admits a unique positive steady-state. The Jacobian matrix with respect to homogeneous perturbations is then given by  
 $$ \mathcal{J}[\textbf{a}^*] = \gamma \left[
\begin{array}{cc}

 -1-4 a^*_1 k_m- \frac{ |\Gamma| }{|\Omega|}\left(\frac{k_b k_m (a^*_1)^2}{k_2}+k_0\right) 

& 2 a^*_1 k_m \\

2 k_2+ \frac{k_b}{|\Omega|} \left[\mathcal{M}_0 - \left(\frac{2 k_m (a^*_1)^2}{k_2}+a^*_1\right) |\Gamma| \right] -\frac{2|\Gamma|}{|\Omega|} \left(\frac{k_b k_m (a^*_1)^2}{k_2}+k_0\right)

& -k_2

\end{array}
\right]$$
   
with a second-order characteristic polynomial $p(\lambda) = det(\lambda \textbf{I} -  \mathcal{J}[\textbf{a}^*])$ given by $p(\lambda) =  \lambda^2  + b \lambda  + c $, where $$b =  \gamma \left(\frac{(a^*_1)^2 |\Gamma|  k_b k_m}{k_2 |\Omega| }+4 a^*_1 k_m+\frac{|\Gamma|  k_0}{|\Omega| }+k_2+1\right) >0 $$
and 
$$ c =  \gamma^2 \left( \frac{8 (a^*_1)^3 |\Gamma|  k_b k_m^2}{k_2 |\Omega| }+ \frac{3 (a^*_1)^2 |\Gamma|  k_b k_m}{|\Omega| }+ \frac{2 a^*_1 k_m}{|\Omega|} \left(2 |\Gamma|k_0 - k_b \mathcal{M}_0 \right) + \frac{|\Gamma|  k_0 k_2}{|\Omega| }+ k_2  \right) $$ is also positive because  $ 2 k_0 |\Gamma|   - \mathcal{M}_0 k_b \geq 0 $. From that we conclude that both eigenvalues 
$$ \lambda  = \frac{-b \pm \sqrt{b^2 - 4c}}{2} $$
must have real negative parts and therefore the steady-states are linearly stable. We then perform a similar argument for  non-homogeneous perturbations. From \eqref{a1_diff}, we obtain the modified Jacobian matrix  $\tilde{\mathcal{J}}(\textbf{a}^*)$, and for a given $l \in \mathbb{N}_0$, we have
 \begin{equation*}
 \tilde{\mathcal{J}}(\textbf{a}^*) - \eta_l \textbf{D} =  
\begin{bmatrix}
  - \gamma(1+4 a^*_1 k_m) - \eta_l & \quad  \gamma \left\{ 2 k_2 + \frac{k_b}{|\Omega|}  \left[\mathcal{M}_0 - |\Gamma| \left(\frac{2 k_m (a^*_1)^2}{k_2}+a^*_1\right) \right] \right\}   \\[15pt]  
 2 \gamma  \ a^*_1  \ k_m &\quad  - \gamma \ k_2 - d_2 \eta_l  \\
\end{bmatrix}
\end{equation*}
with characteristic polynomials  $p_l(\lambda): = \det(\lambda \textbf{I} - \gamma \tilde{\mathcal{J}}(\textbf{a}^*) + \eta_l \textbf{D})$ given by the quadratics $p_l(\lambda) =  \lambda^2  + b_l \lambda  + c_l$, where 
$$b_l =  \gamma  \  (4 a^*_1 k_m+k_2+1) + \eta_l (d_2 + 1) >0 $$
and 
\begin{align*}
c_l &= \left[d_2 \eta_l^2+\gamma  \eta_l (d_2+k_2)+\gamma ^2 k_2 \right] +  2 \ \gamma  k_m  \ a^*_1  \left(2 d_2 \eta_l - \frac{\gamma  k_b \mathcal{M}_0}{|\Omega|}\right) \\
& \qquad
    +  \frac{2 \gamma^2 (a^*_1)^2 |\Gamma| k_b k_m  }{|\Omega| }\left(1  + \frac{2  a^*_1  k_m }{k_2} \right)
\end{align*}

In the case  $l \in \mathbb{N}$, the $c_l$ terms are also positive, since we assume  $$  k_b \leq \frac{2  d_2 \eta_l |\Omega|}{ \gamma \mathcal{M}_0} \iff  2 d_2 \eta_l - \frac{\gamma  k_b \mathcal{M}_0}{|\Omega|} \geq 0  \quad \forall i \in \mathbb{N},$$ and in this case the $p_l(\lambda)$ have no roots with positive real parts. In the  case $l=0$, we know that $\eta_0 = 0$, but the modified matrix $\tilde{\mathcal{J}}(\textbf{a}^*)$  yields a different linearized system. Thus we have to analyze the stability of the linear equation given in \eqref{Ais}  in the case where $l=0$, i.e, 
 $$\frac{d\mathcal{A}_0}{dt}  = \left[ \gamma \tilde{\mathcal{J}}(\textbf{a}^*) \right] \mathcal{A}_0 $$
with characteristic polynomial $p_0(\lambda) = \lambda^2 + b_0 \lambda + c_0$ where
$$b_0 =  \gamma  \  (4 a^*_1 k_m+k_2+1) >0 $$ and 
\begin{align}
c_0 &= \gamma^2 \left\{ k_2  -  2k_m  \ a^*_1  \left( \frac{  k_b \mathcal{M}_0}{|\Omega|}\right) 
    +  \frac{2 (a^*_1)^2 |\Gamma| k_b k_m  }{|\Omega| }\left(1  + \frac{2  a^*_1  k_m }{k_2} \right)\right\} \nonumber \\
    &=  \gamma^2 \left\{ k_2  - \frac{2 k_m k_b a^*_1}{|\Omega|}\left[ \mathcal{M}_0  -   |\Gamma| \left(a^*_1 +   2  \frac{ (a^*_1)^2  k_m }{k_2} \right) \right] \right\}.
    \label{ci}
\end{align}
    
 We now verify that $c_0 \geq 0$. In fact, from \eqref{fzero} when $N=2$, we obtain $$ \frac{1}{|\Omega|}\left[ \mathcal{M}_0  -   |\Gamma| \left(a^*_1 +   2  \frac{ (a^*_1)^2  k_m }{k_2} \right) \right] = \frac{k_2 a^*_1}{\left[k_0 k_2 +  k_b  (a^*_1)^2  k_m \right] },$$ 
 and therefore by substituting the above equation on \eqref{ci} and using that $a^*_2 = \frac{k_m}{k_2} (a^*_1)^2$, we obtain 
 $$ c_0  = \gamma^2   k_2 \left\{ 1  -   2 \ \frac{ a^*_2 }{\left[\frac{k_0 }{ k_b } +  a^*_2   \right] }  \right\}.$$ 
    
 Finally, the hypothesis gives us $\frac{k_0}{k_b} \geq \frac{\mathcal{M}_0}{2 |\Gamma|}$ and by using $ \mathcal{M}_0 - 2 |\Gamma| a^*_2 \geq 0 $ (total mass of $a_2$  at steady-state does not exceeds the total mass of the system), we  obtain $ \frac{\mathcal{M}_0}{2 |\Gamma|} \geq a^*_2 $ and therefore 
 $$  \frac{ a^*_2 }{\left[\frac{k_0 }{ k_b } +  a^*_2   \right]} \leq \frac{1}{2} $$
from which we conclude that $c_0 \geq 0$. Therefore, the steady-state is stable against non-constant perturbations.

\end{proof}
%\todopr{Is the square at the end of a proof standard notation?}

\subsection{Existence of Solutions to the General and Reduced Models}

In the sections above, we developed a stability analysis of the reduced model \eqref{nondim:1}--\eqref{nondim:2}, motivated by Turing in \cite{turing1952}, and essentially following the technical approach in \cite{Ratz2012, Ratz2015}.
Understanding the stability of solutions (if any exist) with respect to parameters in the reduced model is part of a larger set of basic mathematical questions regarding the solutions to the reduced system \eqref{nondim:1}--\eqref{nondim:2}, as well as to the more general system \eqref{orig_model_pde:1}--\eqref{oil_boundary}.
Knowing when these types of models are \emph{well-posed} (solutions exist, are unique, and depend continuously on problem data) is critical for drawing scientific conclusions from mathematical estimates and numerical simulations of solutions.
When models of critical phenomena correctly capture solution instability and even multiplicity, knowing when and how many solutions are permitted by the model is also crucial.
In the context of dynamical problems, a standard approach to developing a solution theory is to prove that solutions exist for small time (\emph{local existence}), and then endeavor to extend such results to large time (\emph{global existence}).
In developing existence results one generally first considers
\emph{small data} (initial and boundary conditions being small perturbations of zero), with again the hope of extending such results to \emph{large data} (essentially no restriction on initial or boundary data, other than reasonable smoothness assumptions).
Establishing that solutions are \emph{unique} typically involves exploiting some inherent mathematical structure in the underlying equations.

In~\cite{Sharma2016}, an analysis of a reaction-diffusion system similar to our general model~\eqref{orig_model_pde:1}--\eqref{oil_boundary} is developed; see also~\cite{Jero83} for related results.
The main results in~\cite{Sharma2016} include a global-in-time, large data, existence result based on \emph{a priori} estimates for a linearized model and compactness techniques, together with a fixed-point argument using on a variant of  the Leray-Schauder Theorem (Schaefer's Theorem).
They also obtain a uniqueness result using a Gronwall-type inequality.
%\todomjh{The "local Lipschitz" assumption implies either nonlinearities are affine (i.e. linear) or components of solutions are pointwise uniformly bounded (hard to prove).}
Their results require growth conditions on the nonlinearities appearing in the system (a local type of Lipschitz property, together with pointwise control of individual components in the solution).
These conditions appear to be too restrictive to cover the nonlinearities arising here in~\eqref{orig_model_pde:1}--\eqref{oil_boundary}, which include quadratic and more rapidly growing terms. 
Nevertheless, we expect that results similar to those in~\cite{Sharma2016} can be shown to hold for~\eqref{orig_model_pde:1}--\eqref{oil_boundary} using similar arguments.

\section{Numerical Simulations}
\label{sec:num_sim_sec}

 We perform numerical simulations to complete our mathematical analysis.  We start by analyzing the parameter regions of bistability (\Cref{bist_sec}). Then we investigate whether the stable steady-states become linearly unstable under non-homogeneous perturbations (\Cref{inst_sec}). From the linear instability analysis, we obtain the  \emph{single-patch} non-homogeneous steady-state (\Cref{patt_sec}). Finally, we study the temporal dynamics of pattern formation (\Cref{temp_evol}) and the single-patch dependence on the cell radius (\Cref{change_radius}). The numerical simulations were implemented in Matlab R2018a and Comsol Multiphysics 5.4.

\subsection{Bistability under homogeneous perturbations}
\label{bist_sec}

We begin by computing the homogeneous steady-states $\textbf{a}^*$ and the corresponding eigenvalues of the Jacobian matrix $\mathcal{J} [\textbf{a}^*]$ under homogeneous perturbations (cf.~\Cref{hom_pert_subsec}). We then explore the parameter regions of bistability where the system admits three steady-states, two of them stable and one unstable. In the case $N=2$, we obtain regions of bistability by change the basal binding rate $k_0$ and the $A_2$-dependent binding rate $k_b$ (\Cref{fig:bistability}). For $k_0=0.015$, three steady-state values for $a^*_1$ emerge depending on $k_b$ (\Cref{fig:bistability} (A)). When $k_0$ also changes, we obtain both a \emph{bistability region} (dark-gray) and a \emph{single steady-state region} (light-gray) (\Cref{fig:bistability} (B)). A colored vertical line represents the region from \Cref{fig:bistability} (A). Other parameter choices also lead to bistability regions (see \Cref{fig:SFig_1} (A) for $N=2$ and \Cref{fig:SFig_2} (A) for $N=3$ in the ESM).

\begin{figure}[tbp]
\centering
\includegraphics[width=0.98\textwidth]{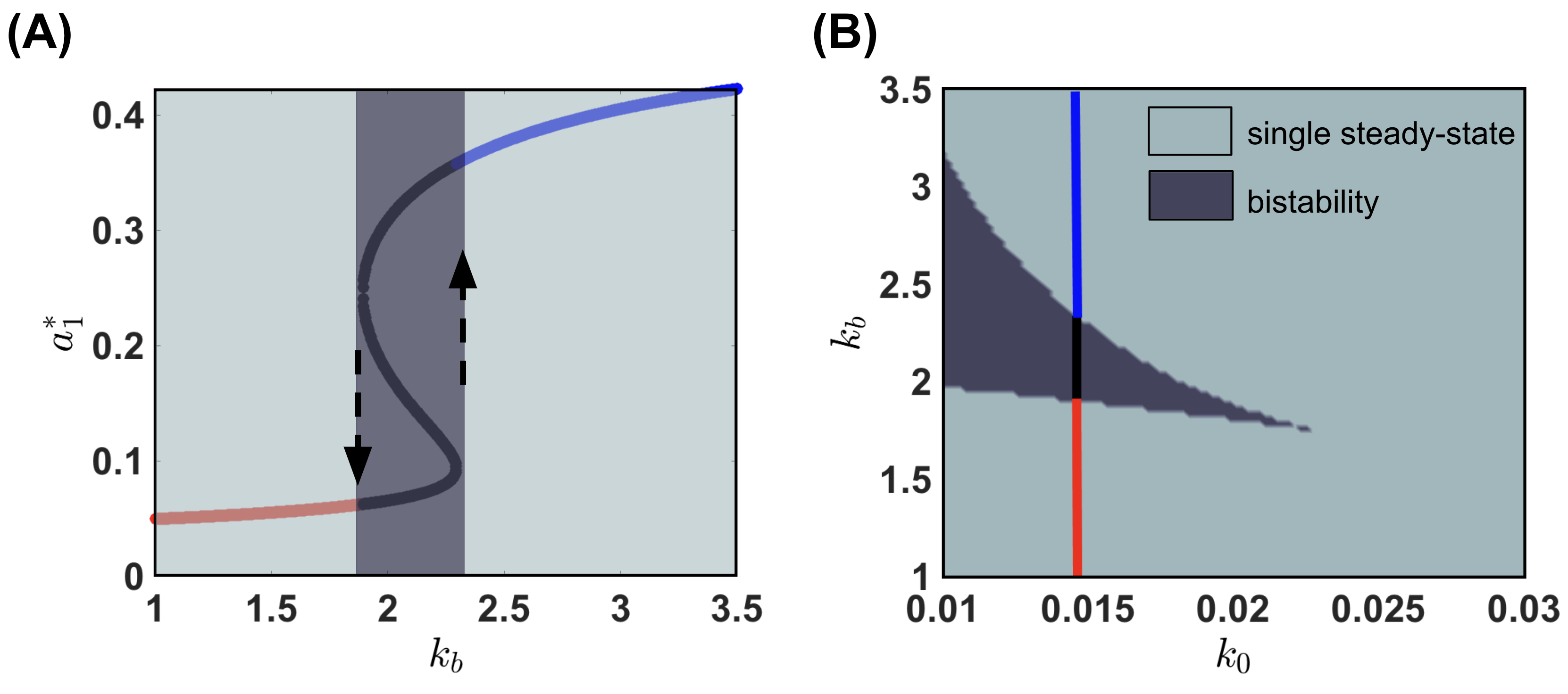}
\caption{\footnotesize\textbf{Steady-states and Parameter Regions for Bistability ($N=2$).} (A) The value of $k_0=0.015$ is fixed, while $k_b$ ranges from 1 to 3.5. We then compute the steady-states, which are the solution of \eqref{fzero}. The single steady-state branches are shown in red and blue, respectively, while the bistable branch is shown in black. The dark-grey rectangle illustrates the emergence of bistability, and the dashed black arrows indicate the stable steady-states. (B) Bistability region for $k_0 \in [0.01,0.03]$ with $k_0=0.015$ marked. The dark gray region contains the $k_b$ values for which the system admits a bistability region. The single steady-state regions are indicated in light-gray.}
\label{fig:bistability}
\end{figure}

\subsection{Linear instability under non-homogeneous perturbations}
\label{inst_sec}

In this section, we numerically investigate which parameter values promote linear instability under non-homogeneous perturbations. We fix an eigenmode index $l \geq 1$ to explore diffusion-driven instabilities, and let $\textbf{a}^*$ be a stable steady-state under homogeneous perturbations. We can thus compute the dispersion relation $h(l)$ \eqref{disp_rel} defined in  \Cref{non_hom_pert_subsec}. If  $h(l)<0$, the steady-state remains stable in the direction of the chosen eigenmode. In this case, the analysis is  inconclusive, since we would also need to determine the stability for the other eigenmodes. If  $h(l)>0$, the steady-state becomes unstable for the chosen eigenmode, and this is sufficient to ensure a diffusion-driven instability \cite{Smith2018}. The case $h(l)=0$ is usually requires higher-order analysis, so we will not consider it in the context of linear stability. 

Given a fixed eigenmode index $l$, we can then divide the parameter space into four regions. We will call them Regions $0$, $1$, $2$, and $3$, where the numbers reflect the exact number of unstable steady-states. More precisely, we define:

\begin{itemize}
    \item \textbf{Region 0}:  The single steady-state region where $h(l)<0$;  There are no unstable steady-states.
    \item \textbf{Region 1}: The single steady-state region where $h(l)>0$;  There is only \underline{one} unstable steady-state. %NOTE: It could also include the bistability region with no diffusion-driven instabilities, given the existence of an unstable steady-state. However, we do not find such a case in our current simulations.
    \item \textbf{Region 2}:  The bistability region where $h(l)>0$ for only one stable steady-state; a total of \underline{two} unstable steady-states.
     \item \textbf{Region 3}: The bistability region where $h(l)>0$ for both stable steady-state; a total of \underline{three} unstable steady-states.
\end{itemize}

 Once we find the bistability and single steady-state regions, we can divide the same parameter space into Regions 0, 1, 2, and 3, by computing the number of unstable steady-states (\Cref{fig:instability}(A) for $N=2$ and (B) for $N=3$). The stability analysis in Region 0 is more subtle and requires further analysis since the stability criterion needs to be fulfilled for all eigenmodes. However, at least for $N=2$, \Cref{thm1} ensures that the system remains stable for sufficiently small $k_b$, which appears to be consistent with the numerical predictions. For higher $k_b$ values, the instabilities emerge in the bistability region (Regions $2$ and $3$) and also in the single steady-state regions Regions ($0$ and $1$).  We obtain a similar result for $N = 3$ (\Cref{fig:instability} (B) ). However, it should be noticed that the $k_b$ values promoting linear instabilities are higher (see $y$-axis ranging from $4$ to $14$) compared to with the case $N=2$.  Regions 0, 1, 2, and 3 can be found with other parameter choices (see \Cref{fig:SFig_1} (B) for $N=2$ and \Cref{fig:SFig_2} (B) for $N=3$ in the ESM).

\begin{figure}[tbp]
\centering
\includegraphics[width=0.90\textwidth]{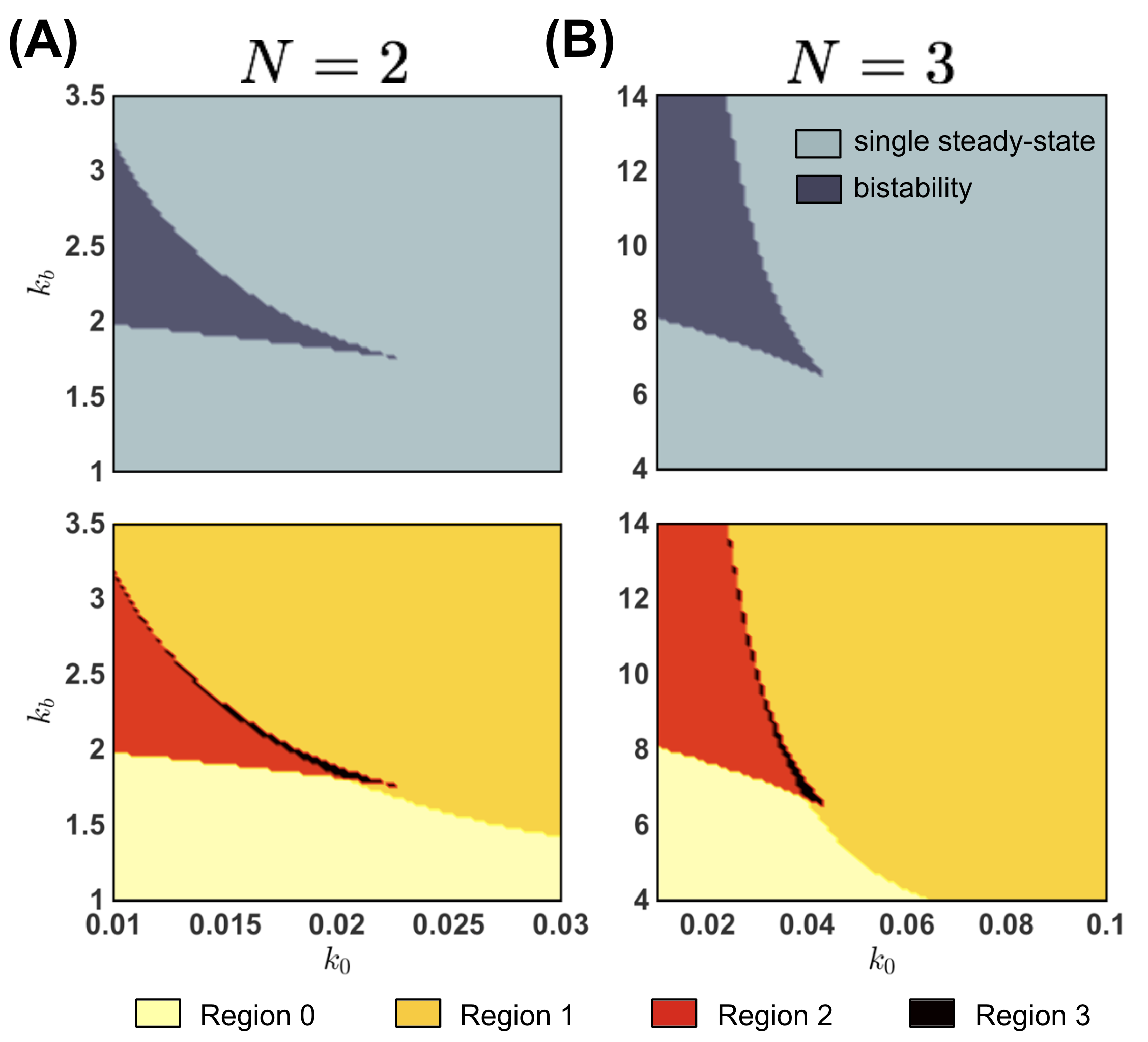}
\caption{\footnotesize\textbf{Parameter Regions of Bistability and Linear Instability ($N=2$ and $N=3$)} We scan the reaction rates for different parameter values. In the top, the parameter regions in the $k_0 \times k_b$ plane where the system exhibits bistability under homogeneous perturbations. In the bottom, Regions 0, 1, 2, and 3 divide the $k_0 \times k_b$ plane according to the number of unstable steady-states under non-homogeneous perturbations for the eigenmode $l=1$ (see text for details).  (A)  $N=2$, $d_2=0.1$,  $\gamma= 1000$, and $j = 1$. (B) $N=3$, $d_2=d_3=0.1$,  $\gamma= 1000$, and $j = 1$. The $k_b$ values that promote linear instability are significantly higher for $N=3$ compared to the case $N=2$.}
\label{fig:instability}
\end{figure}

The Region 1 where a stable steady-state becomes unstable are known as  \emph{Turing-type instability regions} \cite{turing1952,murray1993}, where the system may converge to a spatially non-homogeneous steady-state.  We analyze these region when we increase both $k_0$ and $k_b$ ranges for different values of the diffusion coefficient $d_2$ (\Cref{fig:changes} (A)). As $d_2$ decreases, Region 1 increases, which illustrates how the system becomes unstable as the discrepancies between diffusion become higher. A similar phenomenon occurs as we increase the dimensionless parameter $\gamma$ (\Cref{fig:SFig_3}). On the other hand,  as the eigenmode index $l$ increases, Region 1 significantly decreases  (\Cref{fig:changes} (B)). We exhibit the results for
$l = 2$, $l = 6$, and $l = 8$. Such a decrease allows us to explore the instability of the system by considering $(k_0,k_b)$ on Region 1 only for the first eigenmode $(l=1)$.

\begin{figure}[tbp]
%\begin{adjustwidth}{-1in}{0in}
\centering
\includegraphics[width=0.98\textwidth]{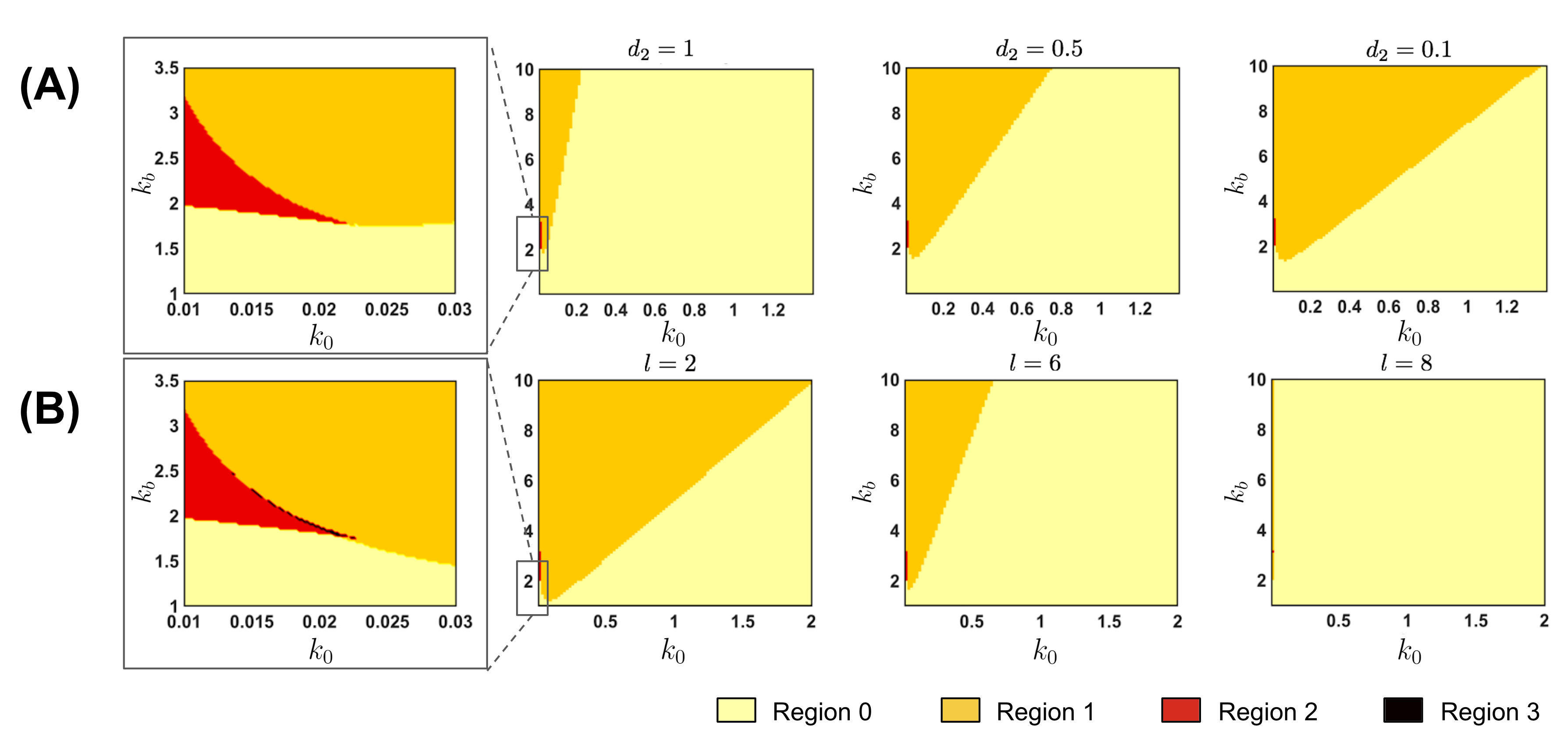}
\caption{\footnotesize\textbf{Changing the diffusion coefficient and the eigenmode of the Laplace-Beltrami operator for $N=2$.}  (A) For $d_2=1$, we show a zoomed plot of the interface of the Regions 1 and 2. Most of  the $(k_0, k_b)$ in the rectangle $[0.01,1.4] \times [1,10]$ belongs to the Region 0, where  the system is stable  under non-homogeneous perturbations. However, by decreasing $d_2$ to $0.5$ and further to $0.1$, the Region 1 (in orange) significanly increases, which means that the system exhibits a larger instability region for lower $d_2$ values. In this figure, we fix $\gamma=10$ and $j=1$ as the eigenmode index. 
(B)  Linear instability Region 1 for eigenmode index values $l=2$, $6$, and $8$. For $l=2$, the system is unstable under non-homogeneous perturbations for most $(k_0, k_b)$ values above the diagonal of the rectangle $[0.01,2] \times [1,10]$. As $l$ increases, Region 1 (in orange) significantly decreases. Therefore, we can analyze the instability of the system by exploring only the first eigenmode, since Region 1 does not expand as $l$ increases.   In this figure, we  fix  $\gamma =100$ and  $d_2=0.1$.
}
\label{fig:changes}
%\end{adjustwidth}

\end{figure}

\subsection{The emergence of the single-patch non-homogeneous steady-state}
\label{patt_sec}

In this section, we investigate the spatio-temporal behavior of our system by numerically integrating the dimensionless equations. We consider a spherical domain of radius $R=1$ and, as in the previous sections, we fix $N=2$ or $N=3$. We avoid solving the surface system \eqref{nondim:1}--\eqref{nondim:2} due to the numerical complexity of the non-local functional. Instead, we solve the dimensionless bulk-surface equations \eqref{nondim_v2:1}--\eqref{nondim_v2:6} (dropping all the hats) for an extremely high cytosolic diffusion ($\tilde{D}=10^8$) on \eqref{nondim_v2:1}. In this way, our resulting system can be seen as an approximation of the reduced system when $\tilde{D} \to \infty$.  We randomly perturbed the homogeneous steady-states by considering a small number $\varepsilon>0$ as the perturbation magnitude and a family $\{\xi(x)\}_{x \in \Gamma}$ of independent random variables uniformly distributed between $-\varepsilon$ and $\varepsilon$. In the case where $N=2$, we define the surface initial conditions

\begin{equation}
a_1(x,0):= a^*_1 + \varepsilon \ \xi(x)  \quad \text{and} \quad  a_2(x,0):= a^*_2 - \frac{1}{2} \varepsilon \ \xi(x) 
\label{rand_N2}
\end{equation}
where the $\frac{1}{2}$ accounts for mass conservation (see \eqref{mass_cons_ND}). For the volume component, we define $u(x,0) := u^*$, where $u^* = \frac{1}{|\Omega|}\left[ \mathcal{M}_0 - |\Gamma| (a^*_1 + 2 a^*_2)\right] $
also because of the mass conservation property. For $N=3$ we define $a^*_1$ as in \eqref{rand_N2}, $a_j(x,0):= a^*_j - \frac{1}{5} \varepsilon \ \xi(x) $ for $j=2$ and $j=3$, and $u^* = \frac{1}{|\Omega|}\left[ \mathcal{M}_0 - |\Gamma| (a^*_1 + 2 a^*_2 + 3 a^*_3)\right]$. 

\begin{remark} The element $(a^*_1,a^*_2,...,a^*_N)$ is a homogeneous steady-state of the system \eqref{nondim:1}--\eqref{nondim:2} if and only if $(a^*_1,a^*_2,...,a^*_N,u^*)$ is a homogeneous steady-state of the system \eqref{nondim_v2:1}--\eqref{nondim_v2:6} provided that 
$$ u^* = \frac{1}{|\Omega|}\left[ \mathcal{M}_0 - |\Gamma| (a^*_1 + 2 a^*_2 + ..+ N a^*_N)\right] $$
\end{remark}

By the remark above, we can obtain the steady-states given the parameter choice in the reduced system \eqref{nondim:1}--\eqref{nondim:2}. Then we can numerically integrate the bulk-surface PDE system   \eqref{nondim_v2:1}--\eqref{nondim_v2:6} using the perturbation scheme described above. In order to associate the parameter regions that lead to instabilities with the formation of spatial patterns, we select four $(k_0,k_b)$ values in the four Regions 0, 1, 2, and 3 (\Cref{fig:patternform}). We fix $N=2$ and the eigenmode index $l=1$. For each  choice of $(k_0,k_b)$ , we integrate the system \eqref{nondim_v2:1}--\eqref{nondim_v2:6} to its final state by perturbing a homogeneous steady-state. We then plot the result for the $a_1$ component and visually inspect the results. For $(k_0,k_b)$ in Regions 1, 2, and 3,  (colored in orange, red or black, respectively), a \emph{single-patch} spatially heterogeneous steady-state emerges. On the other hand, when $(k_0,k_b)$ belong to Region 0, in which the system is stable for the eigenmode index $l=1$, the system converges to its homogeneous steady-state. This result indicates that the single-patch pattern is consistent across parameter choices in Regions 1, 2, and 3, once it remains unchanged in its circular shape and gradient of concentrations. \Cref{fig:SFig_4}
%\todopr{Missing figure reference here.}
in the ESM  shows a similar result in the case $N=3$.

\begin{figure}[tbp]
\centering
\includegraphics[width=0.98\textwidth]{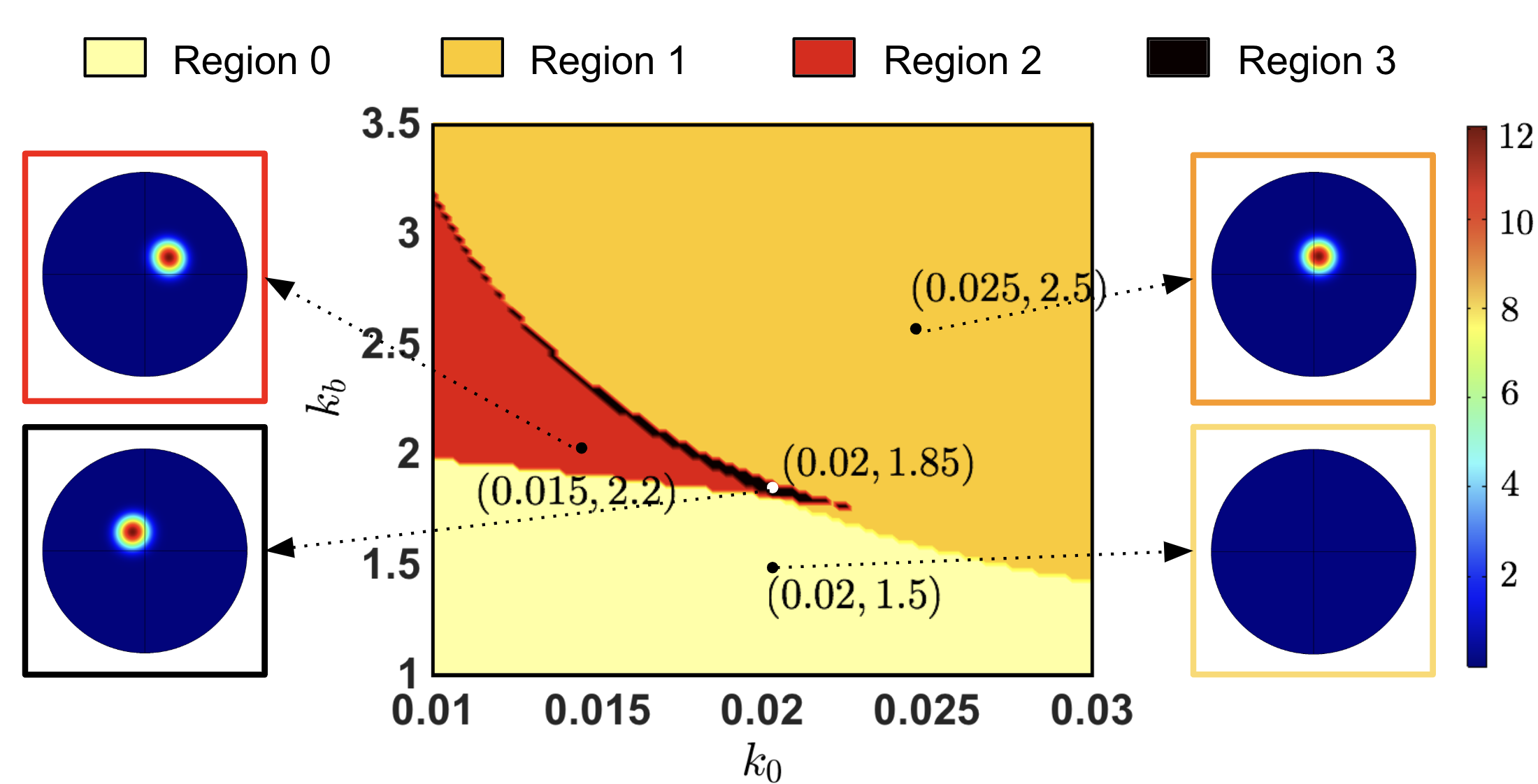}
\caption{\footnotesize \textbf{Linear Instability and Pattern Formation ($N=2$)}.  We exhibit the stability analysis colormap  for eigenmode index $l=1$ and the final spatial profile of the $a_1$ component. We consider four $(k_0,k_b)$ values from Regions 0, 1, 2, and 3, which are colored in light-yellow, orange, red or black, respectively. For Regions 1, 2, and 3, we observe the emergence of a single-patch spatially heterogeneous steady-state which is consistent across parameter regions in terms of its circular shape and concentration gradient. For Region 0, we do not observe a pattern formation for this particular eigenmode.  In this figure, $d_2=0.1$, $\gamma= 1000$, $k_m = k_2 = 1$. steady-state values. Region 0: $a^*_1=0.0812$, $a^*_2=0.0066$, $u^* = 2.7168$. Region 1: $a^*_1 =  0.3817$ , $a^*_2 = 0.1457$ , $u^* = 0.9806$. Region 2: $a^*_1=0.2759$, $a^*_2 = 0.0761$, $u^* = 1.7155$. Region 3: $a^*_1= 0.1107$, $a^*_2 = 0.0123$, $u^*  = 2.5942$}
\label{fig:patternform}
\end{figure}

\subsection{Temporal evolution and pattern formation}
\label{temp_evol}

In this section, we further investigate the temporal evolution of the system.  We consider $N=2$  and $(k_0,k_b) = (0.025, 2.5)$ that belongs to Region 1 (see \Cref{fig:patternform}).  We then observe the spatial distribution of $a_1$ for different times (\Cref{fig:tempevol} (A)). At $t=0$, We apply a random perturbation of magnitude $\varepsilon = 10^{-10}$ around the unique homogeneous steady-state that is unstable under non-homogeneous perturbations. The system then smooths due to diffusion and the small random peaks continuously coalesce and react, until a few large domains emerge at $t=0.099$. At $t=0.114$ and $t=0.119$,  multiple patches of higher $a_1$ concentration emerge. The feedback term \eqref{eq_feedback} then plays its role, once the higher $a_2$ concentration location promotes the recruitment of more cytosolic component. This leads to the formation of the single-patch profile at $t=0.159$. From that time until the final time ($t=1$), the spatial configuration only changes in terms of concentration gradients.  File F1 in the ESM contains a movie of the simulation shown in \Cref{fig:tempevol} (A) for both monomeric ($a_1$) and dimeric components ($a_2$).  In \Cref{fig:SFig_5} in the ESM, we show  a similar result for $N=3$.
%\todopr{Missing figure references here.}

In order to quantify the single-patch size,  we quantify the surface area of the high-concentration locations in the spherical domain. For this purpose, we define the function
\begin{equation}
 \mathcal{I}^{\varepsilon}_j(t) = \int_{\Gamma} \mathbbm{1}_{\left\{a_j(x,t) > \langle a_j \rangle (t) + \varepsilon \right\}} ds  
 \label{area_fun}
 \end{equation}
 where $\varepsilon$ is the perturbation magnitude, $j$ is the index of the oligomeric component, and $\langle a_j \rangle (t) =   \int_{\Gamma} a_j ds $ is the average concentration of $a_j$ across the sphere $\Gamma$. In Figure  \Cref{fig:SFig_6} 
 in the ESM, we illustrate how the percentage of $\mathcal{S}^\varepsilon_j$ with respect to the total surface area does not change significantly as $\varepsilon$ changes. We then evaluate the evolution of $\mathcal{I}^{\varepsilon}_j(t)$ over time (\Cref{fig:tempevol} (B)).  We exhibit the results of a single simulation for $N=2$ and $N=3$, and  $\varepsilon = 10^{-10}$. At early times, when the concentrations $a_j$ are close to the steady-state $a^*_j$ across the domain, $\mathcal{I}^{\varepsilon}_j(t)$ remains close to 0. Then the combination of diffusion and the feedback term makes the concentration gradients increase in a large portion of the domain, as illustrated in \Cref{fig:tempevol} (A) for $t=0.099$, $t=0.114$, and $t=0.119$ . Finally, the formation of the single-patch promotes the decrease of $\mathcal{I}^{\varepsilon}_j(t)$, since the area of high concentration tends to be small in comparison with the total surface area. Moreover, the concentration outside the patch tends to be small, which makes the average $\langle a_j \rangle (t)$ assume lower values. Therefore, in the final times, the locations in the sphere where the concentrations remain above the average can be associated with the single patch. For this reason, we define the \emph{single-patch area}
 $$\mathcal{S}^\varepsilon_j :=  \mathcal{I}^{\varepsilon}_j(t_f),$$ 
 where $t_f$ is the final simulation time. In this work, we avoid an analytical treatment for the temporal behavior of $\mathcal{I}^{\varepsilon}_j(t_f)$. Instead, we base our definition of the single-patch area on visual inspection of the temporal evolution of $\mathcal{I}^{\varepsilon}_j(t_f)$. In fact, we obtain the same temporal dynamics from \Cref{fig:tempevol} (B) whenever the system forms the single-patch pattern. We also observe that $\mathcal{I}^\varepsilon_1 > \mathcal{I}^\varepsilon_2$ for all times in the case where $N=2$, and also $\mathcal{I}^\varepsilon_1 > \mathcal{I}^\varepsilon_2 > \mathcal{I}^\varepsilon_3 $ in the case where $N=3$. We conclude that $\mathcal{S}^\varepsilon_1 > \mathcal{S}^\varepsilon_2$ (for $N=2$) and $\mathcal{S}^\varepsilon_1 > \mathcal{S}^\varepsilon_2 > \mathcal{S}^\varepsilon_3 $ (for $N=3$). In order to better visualize this area shrinking as the oligomer size increases, we plot the final normalized concentration profiles (\Cref{fig:tempevol} (C)). Given the arc-length parametrization of a geodesic curve crossing the single-patch region, the concentration distributions become tighter for $a_2$ compared to $a_1$ in the case $N=2$. The inset plot shows the non-normalized concentrations, where we see that $a_2>a_1$ in the single-patch location. A similar phenomenon occurs for $N=3$:  the distribution and maximum value of $a_j$ becomes tighter and larger as $j$ increases from 1 to 3.

\begin{figure}[tbp]
\centering
\includegraphics[width=0.85\textwidth]{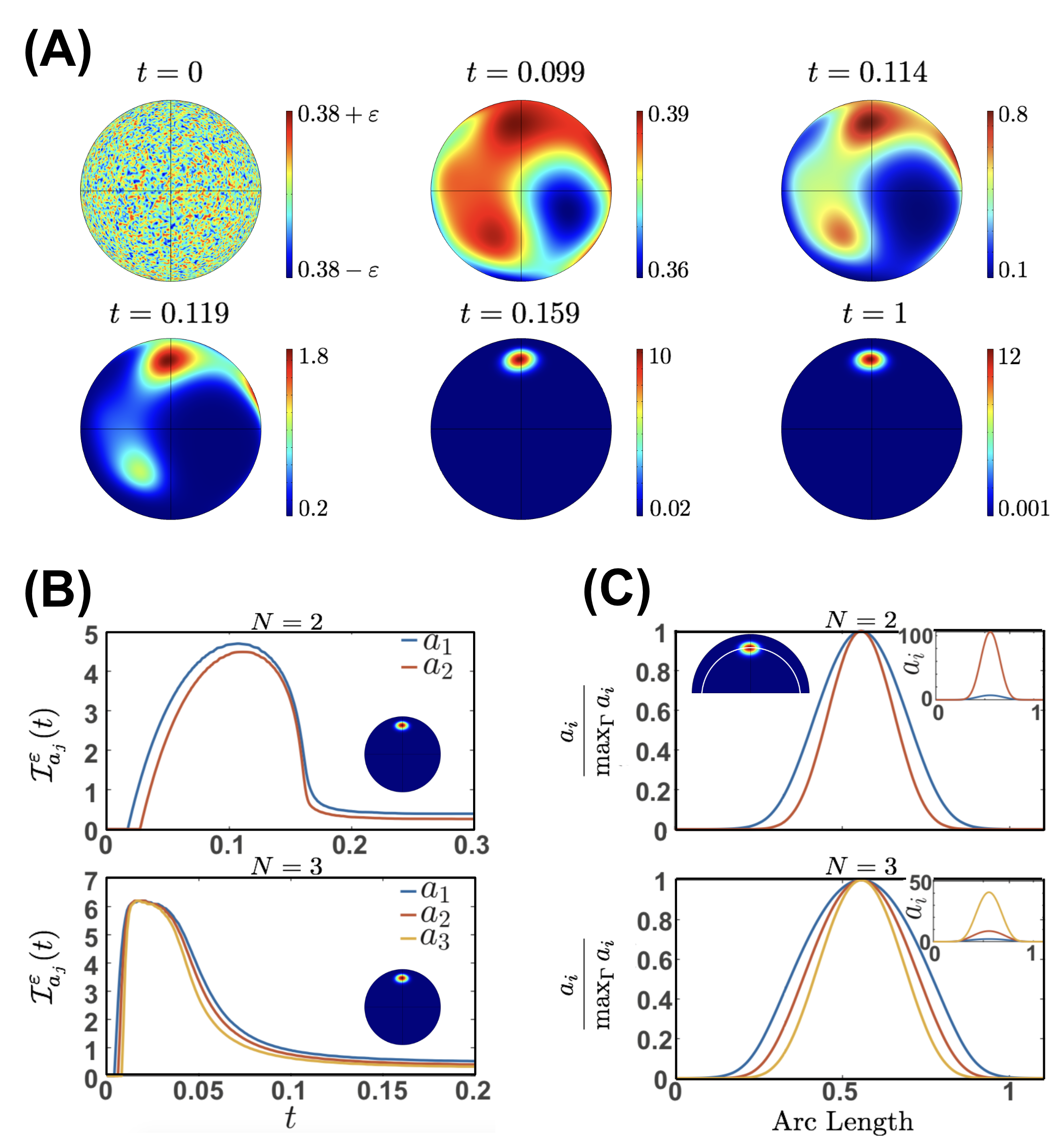}
\caption{\footnotesize\textbf{Temporal Evolution and pattern formation} (A) Spatial distribution of the monomeric component ($a_1$) at different non-dimensional times.  At $t=0$, a random perturbation of magnitude $\varepsilon = 10^{-10}$ is applied to the unstable homogeneous steady-state. At  $t= 0.099$, a small gradient emerges until $t = 0.114$  and $t=0.119$ when the high-concentration  domains begin to coalesce. At  $t=0.159$,  the system converges to the single-patch profile. Finally, at $t= 1$, we show the single-patch steady-state with a final concentration gradient from 0.001 to 12 a.u. In this figure, we consider  $N=2$, $k_0  = 0.025$, and $k_b =2.5$ such that a single steady-state becomes unstable under non-homogeneous perturbations ($(k_0,k_b)$ belongs to Region 1 in \Cref{fig:patternform}). The steady-state is given by  $a^*_1=  0.3817$ , $a^*_2 = 0.1457$, and  $u^* = 0.9806$.  A supplemental movie for panel (A) can be found in supplemental file F1. (B) Evolution of $(\mathcal{I}^{\varepsilon}_{j})(t)$ that gives the single-patch area $\mathcal{S}^{\varepsilon}_{j}$ for $N=2$ and $N=3$  (see text for details). Inset: a single-patch final configuration.  Parameter values: $R=1$, $U=A=13$, $\gamma = 1000$, $d_2=d_3=0.1$, $k_0=0.0161$, $k_m=1$, $k_2=0.4409$. Top:  $N=2$, $k_b=1$. Bottom:  $N=3$, $k_b=10$, $k_g = k_m$, $k_3=k_2$. Initial conditions: $a_1(0)=0.0918$, $a_2(0)=0.0191$, $a_3(0)=0$, $u(0)=2.6099$. (C) For $N=2$ and $N=3$, we plot the final normalized $a_j$ concentrations on a geodesic curve parametrized by arc-length. As the oligomer index $j$ increases, the distribution and maximum value of $a_j$ becomes tighter and higher (inset), respectively.} %\mjh{MJH: There is a hanging reference to supplemental file "xxxx" in this caption.}}
\label{fig:tempevol}
\end{figure}

%%%%%%%%%% figure 7 : change of radius and patch area  %%%%%%%
%\textbf{Change of radius and patch area} 
\subsection{Change of the cell radius and single-patch area}
\label{change_radius}

We investigate how  the single-patch area of a spherical cell depends on its radius $R$. From the non-dimensionalization of the bulk-surface system (see \Cref{nom_dim_subsec}), we defined the characteristic quantities \eqref{char_quan}. In order to move through a dimensional system, we define the dimensionless parameters depending to $R$ to preserve a constant volume concentration. Therefore, we assume a constant $U$ such that $\mathcal{M}_0 \propto R^3$, making the dimensionless parameters as functions of $R$. For this reason, each choice of $R$ will lead to a different solution of the non-dimensional system \eqref{nondim_v2:1}--\eqref{nondim_v2:6}. In particular, it will also change the non-dimensional single-patch area $\mathcal{S}^\varepsilon_j$. We show the results for $R$ ranging from $0.5$ to $5$ and two different parameters (\Cref{fig:changeradius}): 
$$ \text{area percentage} =  \frac{\mathcal{S}^\varepsilon_j}{4 \pi} \times 100 \quad \text{and} \quad \text{dimensional area} = \mathcal{S}^\varepsilon_j \ R^2.$$

For $N=2$ and $N=3$, we provide the same total volume concentration for the system. For clarity, in this section we will refer to the non-dimensional system with the hat ($\hat{\textcolor{white}{a}}$) notation. We define the initial conditions as a linear ramp of slope $\varepsilon$ around the steady-state
$$\hat{a}_1(x,0) = \hat{a}^*_1 + \varepsilon  \ \hat{x}_1 \quad \text{and} \quad  \hat{a}_2(x,0) = \hat{a}^*_2 - \frac{1}{2}  \  \varepsilon \ \hat{x}_1, $$
 where $\hat{x}=(\hat{x}_1,\hat{x}_2,\hat{x}_3) \in \hat{\Gamma}$ and $\hat{x}_1$ is the position with the sphere centered in the origin. In the case $N=3$, we assumed $a_3^*=0$. See \Cref{fig:changeradius} caption for details on the parameter choices. We then observe the same phenomena: the dimensional area (\cref{fig:changeradius} red; open circles) increases approximately linearly with $R$. On the other hand, the area percentage (\cref{fig:changeradius} black; closed circles) decreases with $R$. We can then conclude that although the dimensional area of clusters increases, the additional spherical area changes at a much faster rate since the area percentage varies with $\approx \frac{1}{R}$. The effect of $N$ on single-patch area shows both dimensional area and area percentage  are higher for $N=3$ (\cref{fig:changeradius}(B)) in comparison with $N=2$  (\cref{fig:changeradius}(A)).

\begin{figure}[tbp]
\centering
\includegraphics[width=0.90\textwidth]{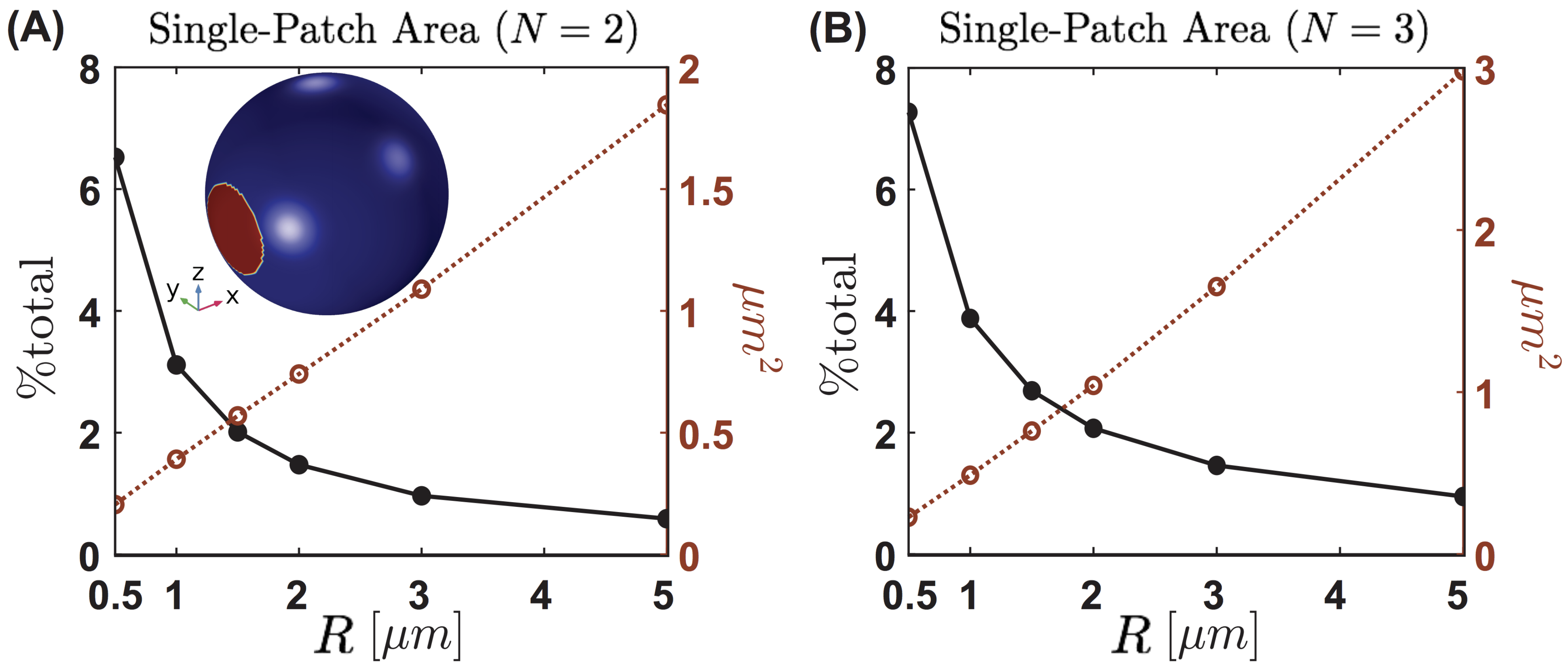}
\caption{\footnotesize\textbf{Change of the cell radius and single-patch area.} We quantify the percentage of the total area and the dimensional area (see text for details), for various radius $R$ ranging from $0.5$ to $5$. The $R$ value was changed in the non-dimensional system with a fixed concentration ($U$) through variations in $\Gamma$, $\gamma$, $A$, $\hat{k}_0$, $\hat{k}_m$, and $\hat{k}_g$. (A) We quantify the Patch size for the $N=2$ case (red; open circles), then normalized against the total area of the sphere (black; closed circles). As the radius increases, the patch size increases approximately linearly, but the percent area decreases rapidly. (B) The same simulation for $N=3$. As the radius increases the patch size increases, but the total percent area still decreases. Between cases,  we observe the same general qualitative properties for single-patch area percentage and dimensional area. The major differences arise in the absolute values, as $N=3$ creates larger patches.}% as predicted in \Cref{fig:tempevol}C.  In these simulations,  $\varepsilon = 0.01 * a^*_2$, $a^*_1=0.0981$, $a^*_2=0.0191$, $a^*_3=0$, $u^* = 2.6099$. Parameter values.  $d_2=d_3=0.1$, $U=13$, $A=13R$ $D_u/D_{a_1}=1x10^{8}$, $k_0=\frac{0.0161}{R}$, $k_b=10$, $k_g = k_m =R$, $k_3=k_2=0.4409$, $\Gamma=4\pi R^2$, $\gamma=1000R^2$.}
%int(a_i/(avg(a_i)+eps)>1)
\label{fig:changeradius}
\end{figure}

%The $R$ value was changed in the non-dimensional system using $\Gamma$ and $\gamma$ (A , K0, Km/Kg, values in slack - move to text? U was fixed across multiple radii to study the differences in constant total concentration as opposed to constant protein.) as functions of $\Gamma=4*\pi*R^2$ and $\gamma=1000*R^2$ respectively.

%\newpage
%\clearpage
\section{Discussion}
 \label{sec:diss}

%Introductory paragraph: Revise what we have done. Punch lines 1, 2 and 3. 
Protein heterogeneity in the PM is of critical importance to cellular functions. 
Many factors influence this heterogeneity, including membrane composition, protein-protein interaction, phase separation, lateral diffusion, and possible feedback, resulting in the formation of spatial patterns  \cite{hashimoto2010,johannes2018,ispolatov2005}.
For this reason, understanding the interplay of aggregation kinetics, lateral diffusion, and feedback in the formation of spatial patterns is an essential step towards developing a complete description of the mechanisms behind protein clustering on the cell surface. 
In this work, we developed a bulk-surface model for protein aggregation with positive feedback that exhibits a spatial heterogeneous single-patch steady-state.  
To the best of our knowledge, this is the first modeling attempt that merges the reaction-diffusion version of classical Smoluchowski dynamics with the modern bulk-surface geometrical setup. 
 
A major result from our model is the role played by the feedback term $k_b \ a_N$ in the boundary conditions \eqref{eq_feedback}. If $k_b$ is low enough, the steady-state distribution is spatially uniform, and no protein heterogeneity exists. 
For $N=2$, we formally proved such a result (\Cref{thm1}), and for $N=3$, we used numerical simulations to observe a similar phenomenon. %\todopr{Lucas, Michael -- you are discussing work that is already done, therefore past tense is appropriate.}
In particular, in the total absence of feedback ($k_b=0$), we observed that spatial heterogeneity is not achievable when we only considered protein-protein interaction. On the other hand, if $k_b$ is sufficiently high, we observed the emergence of linear instability and therefore patterning on the cellular surface. Experimental observations have shown that membrane proteins do organize in a spatially heterogeneous fashion \cite{choquet2010,gan2015,padmanabhan2019}.
However, the molecular mechanisms are still being investigated experimentally. %underlying such interactions are still poorly understood.
 The feedback mechanism we proposed here can also be interpreted in purely biological terms. The largest oligomers recruit ligands from the cytosol, which form ligand-receptor monomers.  
If the rate of recruitment of monomers is low, diffusive effects dominate, and the configuration of the system is homogeneous in space. 
On the other hand, a higher rate promotes a significant input of new monomeric components. 
Then, continued oligomerization generates higher concentrations of the largest components, which closes the  positive feedback loop and drives  pattern formation. 
The largest oligomers can thus be interpreted as self-activators of pattern formation.
For this reason, our mechanism of pattern formation can be related to the classical Turing framework where self-activation is required to generate spatial patterns \cite{turing1952, Gierer1972}.  Another interesting aspect of our model is the absence of an explicit description of cooperative binding.  For the wave-pinning model \cite{mori2008a, mori2011,Cusseddu2018}, cooperativity is included with a Hill function, which accounts for the positive feedback. In contrast, our oligomerization reactions assume only mass-action kinetics, which seems to be insufficient for pattern formation without the feedback term.

Bistable systems are well known to promote diffusion-driven instabilities in the context of cell polarization \cite{rappel2017mechanisms,semplice2012bistable}.  For the wave-pinning model \cite{mori2008a,mori2011,Cusseddu2018}, the structure of the Hill function is responsible for  bistability. 
Other studies followed a similar approach,  using a particular choice of reaction flux that is naturally associated with a bistable regime \cite{Beta2008,Alonso2010}.
In our model, bistability emerges by the combination of two key ingredients: positive feedback and mass conservation.
This observation becomes clear as we carefully inspect the steady-state analysis of the reduced system (cf. \Cref{hom_ss_subsec}).
First, the equilibrium of the oligomerization reactions (driven by mass action kinetics only) provides the distribution across the different surface components. 
Then, the input from the non-local functional comes into play, as a consequence of the boundary conditions and mass conservation. 
The non-local functional at steady-state provides an extra equation, which gives the equilibrium solutions for the monomeric component.
The particular contribution of the feedback comes from the coefficient  $C_N \left( k_0 |\Gamma| N  - \mathcal{M}_0 k_b  \right) \alpha^N$ of the polynomial $\mathcal{P}_N(\alpha)$.
If the coefficient is negative, then the existence of three roots, and therefore three steady-states, is achievable.
In this case, we can compute their stability under homogeneous perturbations and verify bistability.

 Under non-homogeneous perturbations, one or two stable steady-states may become unstable, and the system undergoes a diffusion-driven instability.  
 Even more impressive is the emergence of a linear instability parameter region, called Region 1 in this study, when the system admits a single steady-state that becomes unstable.  
 We note that in Getz et al.~\cite{getz2018stability}, the authors were able to find a region of linear instability for the Wave-Pinning model that is comparable with our Region 1. 
 While the authors briefly discussed the changes in that parameter region for different wave-numbers, here we explicitly showed that the leading eigenmode exhibits a region of instability that shrinks as the eigenmode index increases.
 Such instability in the lower modes, which are associated to the smallest positive eigenvalues of the Laplace-Beltrami operator,  has been ofter related to a single-patch steady-state pattern \cite{Ratz2014,goryachev2008dynamics}, which is confirmed for our system.

%\subsection{Punch line 3: The single-patch steady-state} 
The single-patch steady-state consistently appears for parameter values corresponding to the different instability regions (called as Regions 0, 1,2 and 3).
Goryachev et al.  ~\cite{goryachev2008dynamics} found a similar spatial profile for the  Cdc42 GTPase cycle, where the income of new cytoplasmic components maintained the cluster steady-state and compensated for its lateral diffusion. 
A similar phenomenon seems to happen in our system.  
An allegory that explains the stable existence of such heterogeneous steady-states is the so-called ``rich get richer" competition \cite{manor2006dynamical}. 
In this case,  larger domains outcompete for the smaller until only one stable domain arises. 
Our hypothesis about the existence of the single-patch is based on the role of the positive feedback term. 
We assume that the presence of high concentrations of the largest oligomer promotes ligand binding onto the PM in a linear fashion, without any saturation mechanism nor steric effects. 
As in \cite{goryachev2008dynamics}, this assumption seems to account for a resource competition that excludes the possibility of multiple patches.

Based on the insights from our model, we identify future research directions that will enhance studies such as ours. 
In the current formulation, we lack a formal explanation for the emergence and robustness of the single patch steady-state. 
The spatial aspects of the model render such analysis hard, but it may be possible to obtain a formal proof by considering a one-dimensional version of our system as in \cite{mori2008a}. Another interesting quantity to be computed in future studies is the so-called \emph{amplitude of the pattern}, for which a formal calculation was recently developed \cite{chen2019}.
Additionally, the mathematical challenge for a theoretical stability result lies in the increasing complexity of the system as $N$ increases. 
In this case, we have relied on numerical simulations for $N=3$ to identify the threshold phenomenon for diffusion-driven instabilities. 
However, future efforts in this direction could open up new mathematical avenues for stability analysis of increasingly complex systems.
Finally, including the role of curvature and cytosolic diffusion in the formation of membrane protein aggregates would bring us closer to analyses of biological and biophysical systems. These are the focus on ongoing studies in our group.

%\newpage
\section{Acknowledgments} 

This work was supported by Air Force Office of Scientific Research (AFOSR)
Multidisciplinary University Research Initiative (MURI) grant
FA9550-18-1-0051 to P. Rangamani. M. Holst was supported in part by NSF Awards DMS 1620366 and DMS 1345013.

% Bibliography
\footnotesize{
\bibliography{IB_ref} %your .bib file
\bibliographystyle{unsrtPR.bst}
}

\clearpage

%\noindent
%\textbf{List of potential reviewers:}
%\begin{enumerate}
%    \item \textbf{William Holmes}  -- Vanderbilt
%    \item \textbf{Leah Edelstein-Keshet} -- UBC
%    \item \textbf{Nessy Tania} -- Smith College
%    \item \textbf{Wanda Strychalski} -- Case Western Reserve University
%    \item \textbf{Jun Allard} -- UCI
 %   \item \textbf{Max Souza} (BMB editor) -- UFF/Brazil
 %   \item \textbf{Javier Buceta} - Lehigh
%\end{enumerate}

%\newpage
\section{Electronic Supplementary Material}
\beginsupplement

\begin{figure}[h]
\centering
\includegraphics[width=0.85\textwidth]{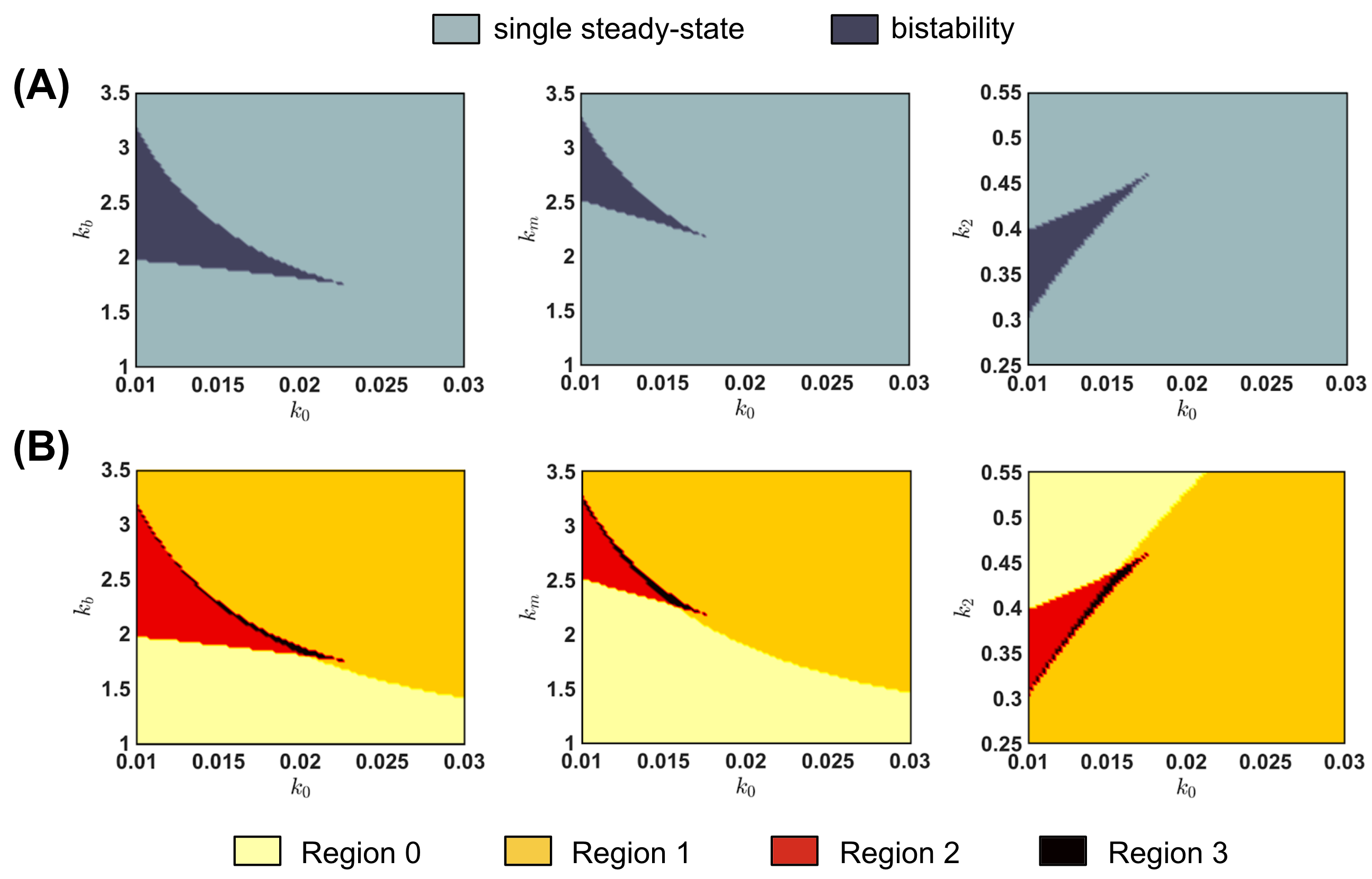}
\caption{\footnotesize\textbf{Parameter Regions of Bistability and Linear Instability ($N=2$).}     We scan the reaction rates for different parameter values. (A) regions where the well-mixed system exhibits bistability. (B) The correspondent Regions 0, 1, 2, and 3 (see manuscript for details).   In the figure, we fixed   $d_2=0.1$,  $\gamma = 1000$, and eigenmode index $l = 1$. 
}
\label{fig:SFig_1}
\end{figure}

\begin{figure}[h]
\centering
\includegraphics[width=0.85\textwidth]{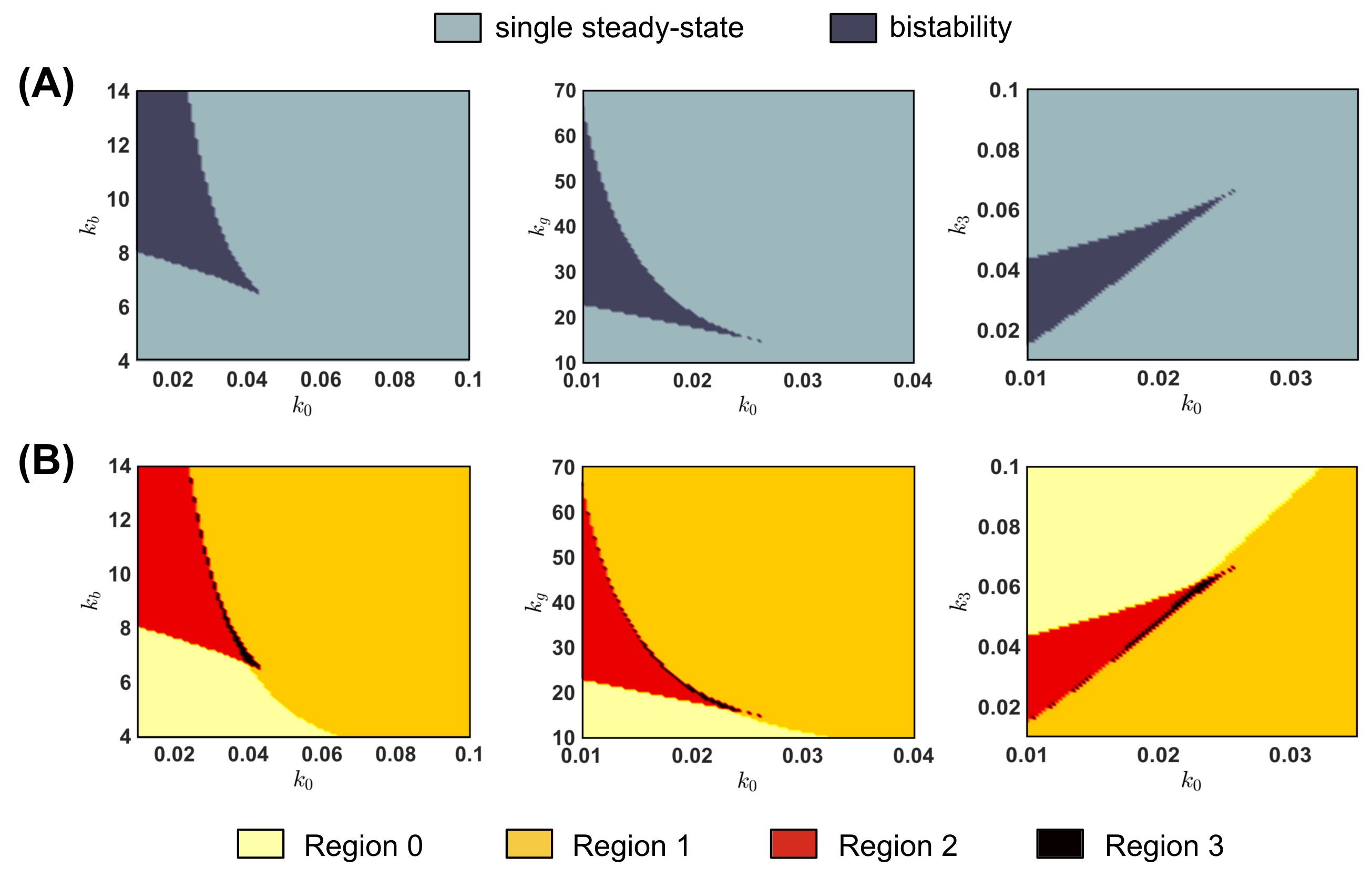}
\caption{\footnotesize\textbf{Parameter Regions of Bistability and Linear Instability ($N=3$).}     We scan the reaction rates for different parameter values. (A) regions where the well-mixed system exhibits bistability. (B) The correspondent Regions 0, 1, 2, and 3 (see manuscript for details).  In the figure, we fixed   $d_2=d_3=0.1$,  $\gamma= 1000$, and eigenmode index $l = 1$.}
\label{fig:SFig_2}
\end{figure}

\clearpage

\begin{figure}[h]
\centering
\includegraphics[width=0.85\textwidth]{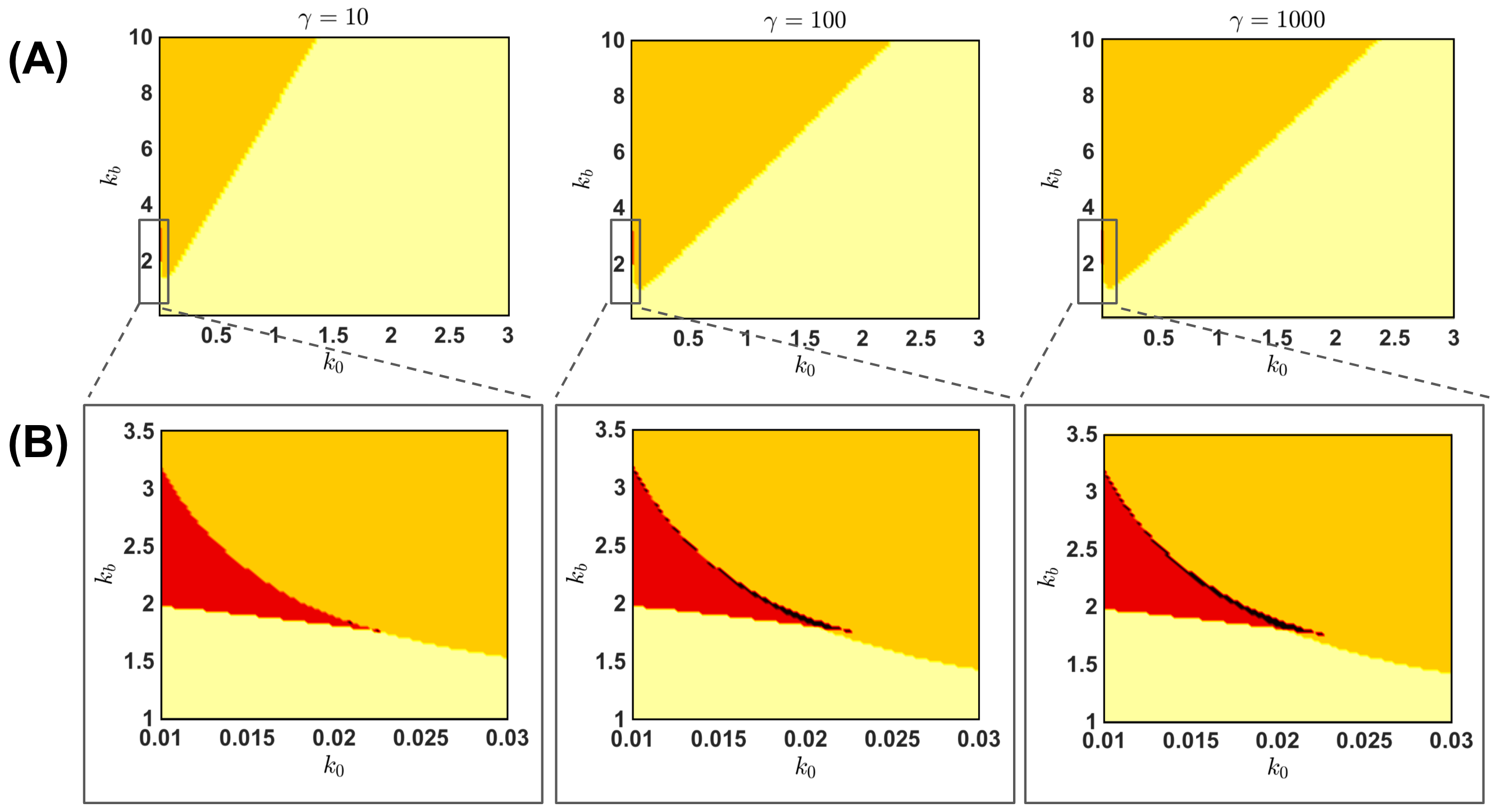}
\caption{\footnotesize\textbf{Changing the dimensionless parameter $\gamma$  ($N=2$).}  (A) Changing the reaction parameter $\gamma$ for a wider range of $k_0$ and $k_b$ allow us to observe instability regions in the single steady-state regime that are considerably larger than  the union of Regions 2 and 3. We observe an increase of Region 1 as $\gamma$ increases. (B) A zoom on Region 2 and 3 shows little differences among the profiles, except for $\gamma=10$, where the Region 3 is significantly reduced. In this figure, we consider $d_2=0.1$ and eigenmode $l=1$.}
\label{fig:SFig_3}
\end{figure}

\vspace*{-0.25cm}

\begin{figure}[h]
\centering
\includegraphics[width=0.80\textwidth]{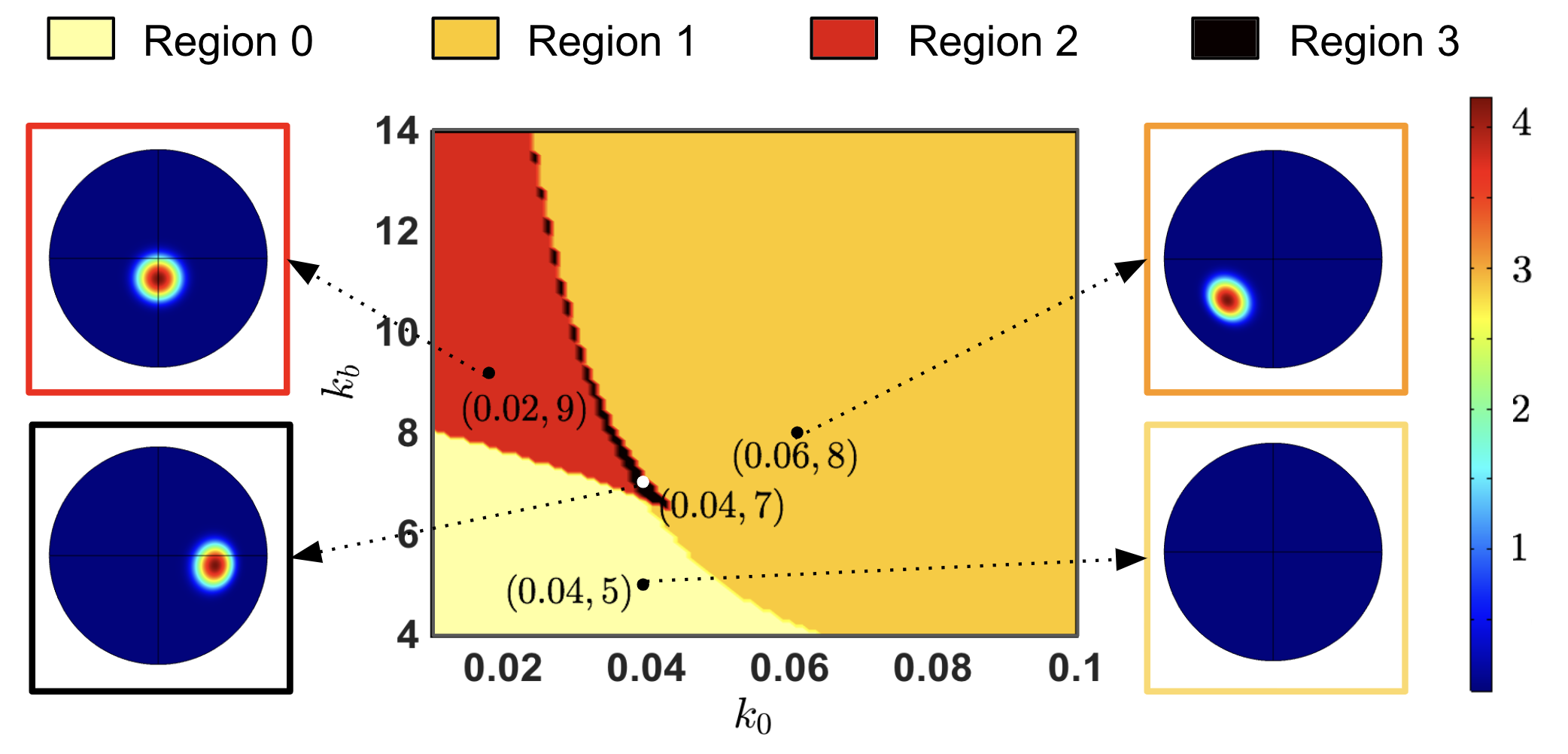}
\caption{\footnotesize\textbf{Linear Instability and Pattern Formation ($N=3$)}.  We exhibit the stability analysis colormap  for eigenmode index $l=1$ and the final spatial profile of the $a_1$ component. We consider four $(k_0,k_b)$ values from Regions 0, 1, 2, and 3, which are colored in light-yellow, orange, red or black, respectively. For Regions 1, 2, and 3, we observe the emergence of a single-patch spatially heterogeneous steady-state, which is consistent across parameter regions in terms of its circular shape and concentration gradient. For Region 0, we do not observe a pattern formation for this particular eigenmode.  In the figure, we fixed   $d_2=d_3=0.1$,  $\gamma= 1000$,  eigenmode index $l = 1$, $k_m = k_g= k_2=k_3 =1$. steady-state values.  Region 0: $a^*_1=0.1251$, $a^*_2=0.0157$, $a^*_3=0.002$, $u^* =2.513$, Region 1: $a^*_1=0.3439$, $a^*_2=0.1183$, $a^*_3=0.0407$, $u^* =0.8922$. Region 2: $a^*_1=0.3442$, $a^*_2=0.1185$, $a^*_3=0.0408$, $u^* =0.8892$. Region 3: $a^*_1=0.1598$, $a^*_2=0.0255$, $a^*_3=0.0041$, $u^* =2.3306$. }
\label{fig:SFig_4}
\end{figure}

\vspace*{-1.0cm}

\clearpage

\begin{figure}[h]
\centering
\includegraphics[width=0.95\textwidth]{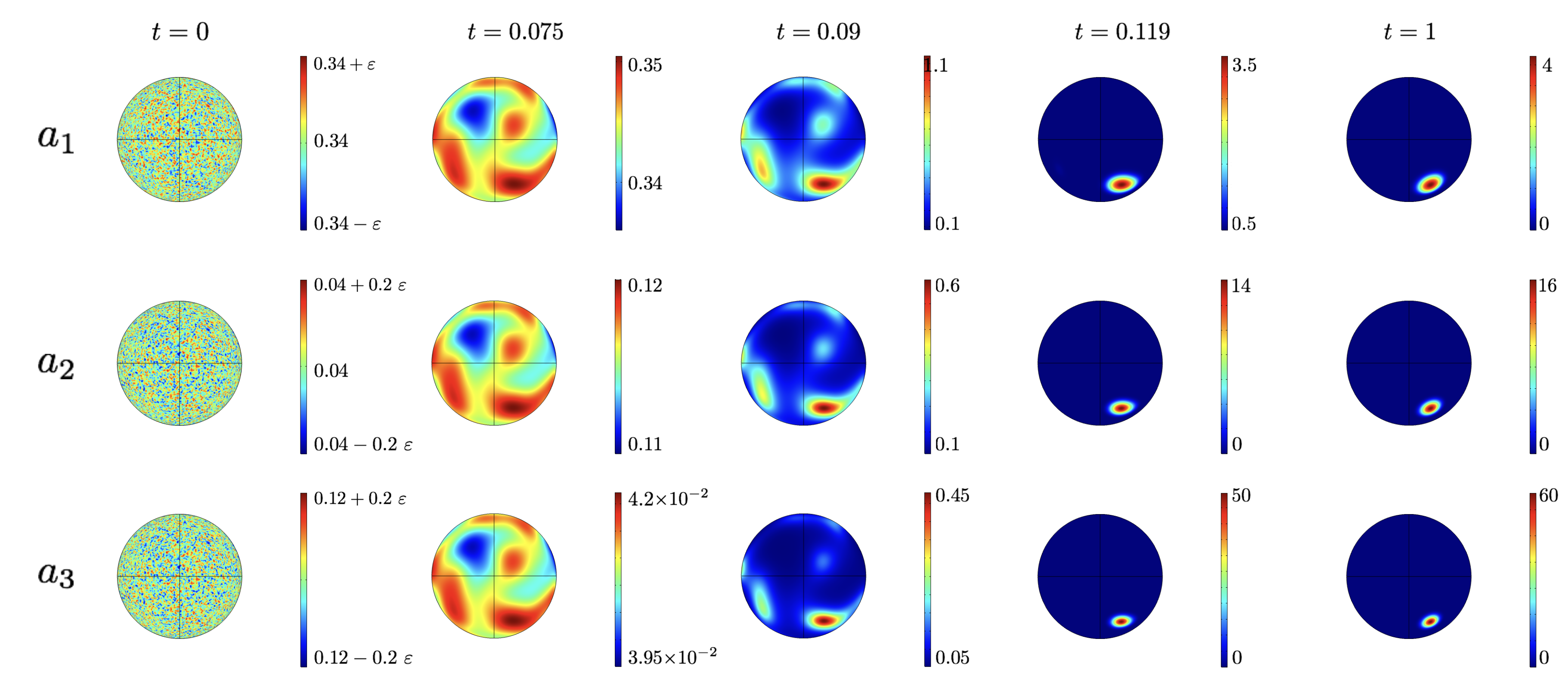}
\caption{\footnotesize\textbf{Temporal Evolution and pattern formation ($N=3$).} Spatial distribution of the three components ($a_1$, $a_2$, and $a_3$) at different non-dimensional times.  At $t=0$, a random perturbation of magnitude $\varepsilon = 10^{-10}$ is applied to the unstable homogeneous steady-state. At  $t= 0.075$, a small gradient emerges. At $t=0.09$, multiple patches can be seen  and at $t=0.119$ the system exhibits the single-patch profile. Finally, at $t= 1$, we show the single-patch steady-state. In the figure, we fixed $d_2=d_3=1$, $\gamma=1000$, $k_m=k_g=k_2=k_3=1$, $k_0=0.06$, $k_b=8$ (Region 1). The steady-state is given by $a_1(0) =0.3439$, $a_2(0) = 0.1183$, $a_3(0)=0.0407$, and $u(0) = 0.8922$.}
\label{fig:SFig_5}
\end{figure}

\vspace*{0.5cm}

\begin{figure}[h]
\centering
\includegraphics[width=0.6\textwidth]{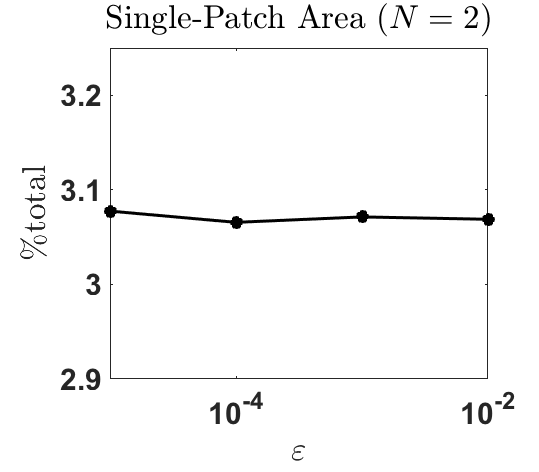}
\caption{\footnotesize\textbf{ Single-patch area and perturbation magnitude} We plot the percentage of $\mathcal{S}^\varepsilon_j$ with respect to the total surface area. We observe that such quantity does not change significantly as $\varepsilon$ changes. In this figure, we assume $R=1$, $U=13$, $A=13$, $k_0=0.0161$, $k_b=k_m=1$, $k_2=0.4409$, $a_1(0)=0.918$,         $a_2(0)=0.0191$, $\gamma=1000$, $N=2$, and  $u(0)=2.6099$}
\label{fig:SFig_6}
\end{figure}

\vspace*{-1.0cm}

\end{document}